\theoremstyle{plain}
\newtheorem{theo}{Theorem}[section]
\newtheorem{lem}[theo]{Lemma}
\newtheorem{propo}[theo]{Proposition}
\newtheorem{cor}[theo]{Corollary}
\theoremstyle{definition}
\newtheorem{defi}[theo]{Definition}
\newtheorem{ex}[theo]{Example}
\newtheorem{rem}[theo]{Remark}
\numberwithin{equation}{section}
\def\nn{\nonumber}
\def\bbR{\mathbb{R}}
\def\bbC{\mathbb{C}}
\def\bbN{\mathbb{N}}
\def\bbZ{\mathbb{Z}}
\def\ii{{\,{\rm i}\,}}
\def\id{\mathrm{id}}
\def\1{\mathbbm{1}}
\newcommand\mycom[2]{\genfrac{}{}{0pt}{}{#1}{#2}}
\def\OO{\mathcal{O}}
\def\JJ{\mathcal{J}}
\def\DD{\mathcal{D}}
\def\Pol{\mathcal{P}^\ast}
\def\TT{\mathcal{T}}
\def\res{\mathrm{res}}
\def\sk{\vspace{2mm}}
\begin{document}
	
	\title{%
		Wavefront sets and polarizations on supermanifolds
	}
	
	\author{Claudio Dappiaggi}
	\email{claudio.dappiaggi@unipv.it}
	\affiliation{Dipartimento di Fisica Nucleare e Teorica, 
	Universit\'a degli Studi di Pavia \& INFN, Sezione di Pavia, Via Bassi, 6, 27100 Pavia, Italy.}
	\author{Heiko Gimperlein}
	\email{h.gimperlein@hw.ac.uk}
	\affiliation{Department of Mathematics, Heriot-Watt University, Colin Maclaurin Building, 
	Riccarton, Edinburgh EH14 4AS, United Kingdom.}
	\affiliation{Maxwell Institute for Mathematical Sciences, Edinburgh, United Kingdom.}
	\affiliation{Institut f\"{u}r Mathematik, Universit\"{a}t Paderborn, Warburger Str.\ 100, 33098 Paderborn, Germany.}
	\author{Simone Murro}
	\email{simone.murro@ur.de} 
	\affiliation{Fakult\"at f\"ur Mathematik, Universit\"at Regensburg, Universit\"atsstra\ss e 31, 93040 Regensburg, Germany.}
	\author{Alexander Schenkel}
	\email{aschenkel83@gmail.com}
	\affiliation{Department of Mathematics, Heriot-Watt University, Colin Maclaurin Building, 
	Riccarton, Edinburgh EH14 4AS, United Kingdom.}
	\affiliation{School of Mathematical Sciences, University of Nottingham, University Park, Nottingham NG7 2RD, United Kingdom.}

	\date{\today}


\begin{abstract}
\noindent In this paper we develop the foundations for microlocal analysis on supermanifolds. Making use of pseudodifferential operators on supermanifolds as introduced by Rempel and Schmitt, we define a suitable notion of super wavefront set for superdistributions which generalizes Dencker's polarization sets for vector-valued distributions to supergeometry. In particular, our super wavefront sets detect polarization information of the singularities of superdistributions. We prove a refined pullback theorem for superdistributions along supermanifold morphisms, which as a special case establishes criteria when two superdistributions may be multiplied. As an application of our framework, we study the singularities of distributional solutions of a supersymmetric field theory.
\end{abstract}

\keywords{supermanifolds, pseudodifferential operators, polarized wavefront sets, microlocal analysis, propagation of singularities}

\pacs{03.50.-z, 11.30.Pb, 02.30.Sa}

\maketitle


\section{Introduction and summary}
Supergeometry has its origins in theoretical physics, where it is used
as a refined model of spacetime that treats Bosonic and Fermionic
degrees of freedom on an equal footing. The basic concept
is that of a supermanifold, which loosely speaking is a manifold
with even (Bosonic) and odd (Fermionic) local coordinates. 
Quantum field theories on supermanifolds unify Bosonic and
Fermionic quantum fields in a single entity called a super quantum field. They
are very interesting from the perspective of a quantum field theorist
because of their improved renormalization behavior. Such special features
of supergeometric quantum field theories are collectively called 
non-renormalization theorems \cite{Grisaru,Seiberg}.
\sk

During the last decade, our mathematical understanding of perturbative
quantum field theory on Lorentzian manifolds has steadily improved, mainly due to
the development of perturbative algebraic quantum field theory (pAQFT), 
see e.g.\ Ref.~\onlinecite{Fredenhagen} for a recent review. In this framework, 
a key role is played by the class of Hadamard states, which are distinguished 
from a physical viewpoint since they share the same ultraviolet behavior of 
the Minkowski vacuum and they yield finite quantum fluctuations of all 
observables. From a mathematical perspective, they are defined in terms 
of a prescribed singular structure of the truncated two-point function
associated to the state \cite{Radzikowski}. Hence, in this respect, microlocal analysis serves 
as one of the main techniques used in pAQFT since 
its role is to analyze carefully the singularities of distributions like 
propagators and $n$-point functions. This proves essential not only for 
identifying Hadamard states but also for performing the perturbative 
construction and its renormalization.
\sk

The goal of this paper is to develop the foundations of microlocal analysis on supermanifolds.
Our work is based on and extends earlier investigations of Rempel and Schmitt \cite{Rempel}
on pseudodifferential operators on supermanifolds. As a new development, we introduce
a supergeometric generalization of the wavefront set, which is a suitable concept to encode 
polarization information about the singularities of distributions on supermanifolds.
See also Ref.~\onlinecite{Franco} for a first work in this direction, which however 
discards the polarized character of superdistributions.
Our super wavefront sets are motivated by the polarization sets of Dencker \cite{Dencker} 
for vector-valued distributions. However, they are constructed in such a way that they transform
in a natural way under supermanifold morphisms and not only vector bundle morphisms.
The techniques which we develop in this paper will be the basis to identify and to 
construct Hadamard states in the context of quantum field theories on supermanifolds. 
As mentioned before, these are characterized by a prescribed singular behavior of the associated, 
truncated two-point function and they are the building block for a covariant construction of 
Wick-polynomials. The latter are then used to introduce interaction terms within the perturbative framework.
Hence, the results of this paper are expected to play a major role in extending 
pAQFT to supergeometric quantum field theories \cite{HHS}, a longer term research goal that
we hope to achieve in future works. This would provide a rigorous framework to prove 
(and extend to curved supermanifolds) the non-renormalization theorems in Refs.~\onlinecite{Grisaru,Seiberg}.
\sk

The outline of the remainder of this paper is as follows:
In Section \ref{sec:prelim} we fix our notations and give a brief
review of some basic aspects of the theory of supermanifolds.
In Section \ref{sec:polarizationbundles} we assign
to each supermanifold $X = (\widetilde{X},\OO_X)$
a polarization bundle $\pi: \Pol X \to \TT^\ast\widetilde{X}$
over the cotangent bundle of the underlying smooth manifold $\widetilde{X}$;
this is a super vector bundle that encodes the local polarization information
of superfunctions and superdistributions on $X$. Our polarization bundle is a
special case of the general construction by Rempel and Schmitt in Section 8 of Ref.~\onlinecite{Rempel}:
It corresponds to a particular choice of what they call ``admissible tuple'', which is strongly motivated
by the fact that it enables us to detect ellipticity and hyperbolicity of the operators 
appearing in supergeometric field theories, see Examples \ref{ex:elliptic2} and \ref{ex:hyperbolic}.
In Section \ref{sec:superpseudos} we introduce super pseudodifferential operators
on supermanifolds, define their super principal symbols 
as bundle mappings between the polarization bundles, and develop
their calculus. The main definitions in this section are taken from Ref.~\onlinecite{Rempel} 
(see in particular Sections 7 and 8), which we however can present in a simplified 
form because of our particular choice of ``admissible tuple'' for the polarization bundles. 
We also present examples of super pseudodifferential operators which are relevant for physics, 
in particular the equation of motion operators (and their associated propagators) of
the supergeometric field theories studied in Ref.~\onlinecite{HHS}.
Crucially, as we have already indicated above, our concept of super principal symbols 
is able to detect ellipticity (or hyperbolicity) of these operators.
As our first genuinely new result, we introduce in Section \ref{sec:superwavefront} polarization sets for supermanifolds, 
motivated by Ref.~\onlinecite{Dencker}, and thereby define the super wavefront set
of a superdistribution. We analyze the transformation property of the 
super wavefront set under supermanifold morphisms and 
their compatibility with the action of super pseudodifferential operators.
In Section \ref{sec:pullback} we generalize to supermanifolds the ordinary pullback 
theorem for distributions on manifolds, see Theorem 8.2.4 in Ref.~\onlinecite{Hormander}. 
By including the polarization information of superdistributions (and their singularities),
this leads to a refinement of the ordinary pullback theorem. An important example 
is given by the super diagonal mapping, which provides criteria when two superdistributions
may be multiplied. As an application, we analyze in Section \ref{sec:singularities}
the singularities of distributional solutions to the equation of motion of the
$3\vert2$-dimensional Wess-Zumino model.


\section{\label{sec:prelim}Preliminaries}
We briefly recall some basic aspects of the theory of supermanifolds
which are frequently used in our work.
For a detailed introduction to this subject see, for example,
Refs.~\onlinecite{Carmeli,SuperBook} and also Section 2 in Ref.~\onlinecite{HHS} for a short summary.
\sk

A {\em superspace} is a pair $X = (\widetilde{X},\OO_X)$ consisting
of a topological space $\widetilde{X}$ (second-countable and Hausdorff)
and a sheaf of supercommutative superalgebras $\OO_X$ on $\widetilde{X}$, called the structure sheaf.
Explicitly, to each open $U\subseteq \widetilde{X}$ there is assigned a 
supercommutative superalgebra $\OO_X(U)$, called the sections of $\OO_X$ over $U$, 
and to each open $V\subseteq U\subseteq \widetilde{X}$ a superalgebra homomorphism 
$\mathrm{res}_{U,V} : \OO_X(U)\to \OO_X(V)$, called the restriction map.
The restriction maps satisfy the conditions
\begin{flalign}\label{eqn:presheaf}
\mathrm{res}_{U,U} = \id_{\OO_X(U)}~,\qquad 
\mathrm{res}_{V,W}\circ \mathrm{res}_{U,V} = \mathrm{res}_{U,W}~,
\end{flalign}
 for all open $W\subseteq V\subseteq U\subseteq \widetilde{X}$.
Moreover, given any open cover $\{U_\alpha\subseteq U\}$ of an open subset $U\subseteq \widetilde{X}$
and any matching family of local sections, i.e.\
\begin{flalign}
\left\{f_{\alpha}\in\OO_X(U_\alpha) :  \mathrm{res}_{U_\alpha, U_{\alpha\beta}}( f_{\alpha})
 =   \mathrm{res}_{U_\beta, U_{\alpha\beta}}( f_{\beta}) ~~\forall \alpha,\beta\right\}~,
\end{flalign}
where $U_{\alpha\beta} :=U_{\alpha}\cap U_{\beta}$ is the intersection,
there exists a unique section $f\in \OO_X(U)$ 
such that $f_{\alpha} = \mathrm{res}_{U,U_{\alpha}}(f)$.
Loosely speaking, this means that a family of local sections of $\OO_X$ which match in all overlaps 
can be glued to a unique global section and that any global section arises in that way.
\sk

The standard example of a superspace is $\bbR^{m\vert n} := (\bbR^m, C^\infty_{\bbR^m}\otimes \wedge^\bullet \bbR^n)$,
where $\wedge^\bullet \bbR^n$ denotes the Grassmann algebra with $n$ generators.
The sections over any open $U\subseteq \bbR^m$ are given by $C^\infty(U)\otimes \wedge^\bullet \bbR^n$.
Any element $f\in C^\infty(U)\otimes \wedge^\bullet \bbR^n$ has an expansion
\begin{flalign}
f = \sum_{I \in \bbZ_2^n}  f_I \,\theta^I := \sum_{(i_1,\dots,i_n) \in \bbZ_2^n}  f_{(i_1,\dots,i_n)} 
~{\theta^1}^{i_1}\cdots{\theta^n}^{i_n}~,
\end{flalign}
where $\bbZ_2^n := \{0,1\}^n$, $\{ \theta^a\in\bbR^n : a=1,\dots,n\}$ is the standard basis of $\bbR^n$
and $f_{I}\in C^\infty(U)$.
\sk

A morphism $\chi : X\to Y$ between two superspaces $X = (\widetilde{X},\OO_X)$ and $Y = (\widetilde{Y},\OO_Y)$
is a pair $(\widetilde{\chi} , \chi^\ast)$ consisting of a continuous map
$\widetilde{\chi} : \widetilde{X}\to \widetilde{Y}$ and a sheaf homomorphism
$\chi^\ast : \OO_Y \to \widetilde{\chi}_\ast \OO_X$, where  $\widetilde{\chi}_\ast \OO_X$ is the direct image sheaf.
Explicitly, to each open $U\subseteq \widetilde{Y}$ there is assigned
a superalgebra homomorphism $\chi^\ast_U : \OO_Y(U) \to \OO_X(\widetilde{\chi}^{-1}(U))$,
such that for all open $V\subseteq U\subseteq\widetilde{Y}$ the diagram
\begin{flalign}
\xymatrix{
\ar[d]_-{\mathrm{res}_{U,V}}\OO_Y(U) \ar[rr]^-{\chi^\ast_U} && \OO_X(\widetilde{\chi}^{-1}(U))\ar[d]^-{\mathrm{res}_{\widetilde{\chi}^{-1}(U),\widetilde{\chi}^{-1}(V)}}\\
\OO_Y(V) \ar[rr]_-{\chi^\ast_V}& &\OO_X(\widetilde{\chi}^{-1}(V))
}
\end{flalign}
commutes.
\sk

A {\em supermanifold} (of dimension $m\vert n$) is a superspace $X = (\widetilde{X},\OO_X)$
which is locally isomorphic to $\bbR^{m\vert n}$. More explicitly, this means that 
for any point $x\in \widetilde{X}$ there exists an open neighborhood $U\subseteq \widetilde{X}$ of $x$
such that $X\vert_{U} := (U,\OO_X\vert_U)$ is isomorphic as a superspace to
$W^{m\vert n} :=(W , C^\infty_W\otimes \wedge^\bullet\bbR^n)$, for
some open subset $W\subseteq \bbR^m$. We say that 
$\chi : X\to Y$ is a morphism between two supermanifolds $X = (\widetilde{X},\OO_X)$ and $Y = (\widetilde{Y},\OO_Y)$
if it is a superspace morphism.
\sk

Every supermanifold $X = (\widetilde{X},\OO_X)$ comes together with a filtration
\begin{flalign}\label{eqn:filtration}
\xymatrix{
\OO_X(U) & \ar[l] \JJ_X(U) & \ar[l] \JJ_X^2(U) & \ar[l] \cdots~~,
}
\end{flalign}
for any open $U\subseteq \widetilde{X}$, where
\begin{flalign}
\JJ_X(U) := \big\{f \in \OO_X(U) : f^N =0\, ,~\text{for some }N\in\bbN_0 \big\}\subseteq \OO_X(U)~
\end{flalign}
is the superideal of nilpotents
and $\JJ_X^k(U)$ is its $k$-th power, $k\geq 2$.
Locally, i.e.\ for sufficiently small $U\subseteq \widetilde{X}$, by definition there exists
an isomorphism $\OO_X(U) \simeq C^\infty(W)\otimes \wedge^\bullet \bbR^n$ of superalgebras 
for some open $W\subseteq \bbR^m$. Applying this isomorphism to the filtration \eqref{eqn:filtration} 
we obtain
\begin{flalign}\label{eqn:filtrationcoordinates}
\xymatrix{
C^\infty(W)\otimes \wedge^\bullet \bbR^n & \ar[l] C^\infty(W)\otimes \wedge^{\geq 1} \bbR^n & \ar[l] 
C^\infty(W)\otimes \wedge^{\geq 2} \bbR^n & \ar[l] \cdots~~,
}
\end{flalign}
which implies that locally $\JJ_X^k(U) =0$ for all $k>n$. Indeed, in this case
$C^\infty(W)\otimes \wedge^{\geq k} \bbR^n =0$.  Due to the sheaf condition the same 
statement holds globally, i.e.\ $\JJ_X^k(U) =0$ 
for all $k>n$ and $U\subseteq \widetilde{X}$ open.
\sk 

Let us also recall that to any $m\vert n$-dimensional 
supermanifold $X = (\widetilde{X},\OO_X)$ there is canonically 
assigned an $m$-dimensional manifold; it is specified by the topological space $\widetilde{X}$ together
with the structure sheaf $\OO_X/\JJ_X$. 
The underlying continuous map $\widetilde{\chi} : \widetilde{X} \to \widetilde{Y}$ of
any supermanifold morphism $\chi : X \to Y$ is smooth with respect to this manifold structure.
The supermanifold morphism $\iota_{\widetilde{X},X} :  (\widetilde{X},\OO_X/\JJ_X) \to (\widetilde{X},\OO_X)$,
given by $\widetilde{\iota}_{\widetilde{X},X} = \id_{\widetilde{X}}$
and the quotient mapping $\iota_{\widetilde{X},X}^\ast  : \OO_X \to \OO_X/\JJ_X$,
embeds the underlying smooth manifold into the supermanifold.


\section{\label{sec:polarizationbundles}Polarization bundles}
The space of superdistributions on a supermanifold $X$ is locally given by
$\DD^\prime(U)\otimes \wedge^\bullet \bbR^n$, where $U\subseteq \bbR^m$ is an open subset
and $\DD^\prime(U)$ denotes the space of distributions on $U$.
Hence, superdistributions locally carry polarization information in the Grassmann algebra
$\wedge^\bullet\bbR^n$. We now construct a bundle over the cotangent bundle 
$\TT^\ast \widetilde{X}$ of the underlying manifold $\widetilde{X}$, which describes 
the polarization information of superdistributions and their singularities.
Our construction in this section is a special case of the general construction 
by Rempel and Schmitt in Section 8 of Ref.~\onlinecite{Rempel}.
\sk

Let us start with the case where the supermanifold is a superdomain, 
i.e.\ ${U}^{m\vert n} := (U,C^\infty_{U}\otimes \wedge^{\bullet}\bbR^n)\subseteq \bbR^{m\vert n}$
for some open $U\subseteq \bbR^m$. In this case the polarization bundle is defined as
the trivial bundle
\begin{flalign}\label{eqn:trivialpolbundle}
\pi : \Pol U^{m\vert n} :=   \TT^\ast U \times \wedge^\bullet \bbC^n \longrightarrow \TT^\ast U~,~~(x,k,\lambda)\longmapsto (x,k)~,
\end{flalign}
where the fibers are the complexified Grassmann algebras
and $\TT^\ast U = U\times \bbR^m$ is the cotangent bundle over $U$.
\sk

Now consider a supermanifold  morphism $\chi : U^{m\vert n} \to V^{m^\prime\vert n^\prime}$ 
between two superdomains. The underlying smooth map $\widetilde{\chi} : U\to V$ induces
a fiber-wise pullback map 
$\TT^\ast\widetilde{\chi} :  \TT^\ast_{\widetilde{\chi}(x)} V \to \TT^\ast_{x}U$
of cotangent vectors, for any point $x\in U$.
Our goal is to construct a suitable fiber-wise 
map between the polarization bundles such that the diagram
\begin{flalign}\label{eqn:poldiagram}
\xymatrix{
\ar[d]_-{\pi} \Pol V^{m^\prime\vert n^\prime}\big\vert_{ \TT^\ast_{\widetilde{\chi}(x)} V}^{} \ar[rr]^-{\Pol \chi} && 
\Pol U^{m\vert n}\big\vert_{ \TT^\ast_{x} U}^{}\ar[d]^-{\pi}\\
\TT^\ast_{\widetilde{\chi}(x)} V \ar[rr]_-{\TT^\ast \widetilde{\chi}}&& \TT^\ast_{x} U
}
\end{flalign}
commutes, for any point $x\in U$. 
\sk

To approach this problem, we have to analyze in more detail the superalgebra homomorphism
$\chi^\ast_V : C^\infty(V) \otimes \wedge^\bullet\bbR^{n^\prime} \to C^\infty(U)\otimes 
\wedge^\bullet\bbR^{n}$. Using the (non-canonical!) $\bbZ$-gradings
\begin{flalign}
 C^\infty(V) \otimes \wedge^\bullet\bbR^{n^\prime} = \bigoplus_{i=0}^{n^\prime} 
  C^\infty(V) \otimes \wedge^i \bbR^{n^\prime} ~,\quad
   C^\infty(U)\otimes \wedge^\bullet\bbR^{n} = \bigoplus_{i=0}^n  C^\infty(U)\otimes 
\wedge^i\bbR^{n}~,
\end{flalign}
we decompose $\chi^\ast_V$ into components
\begin{flalign}
{{(\chi^\ast_V)}_{j}}^{i} : C^\infty(V) \otimes \wedge^i\bbR^{n^\prime} \longrightarrow C^\infty(U)\otimes 
\wedge^j\bbR^{n}~,
\end{flalign}
which are linear maps by construction. Notice that ${{(\chi^\ast_V)}_0}^0 = \widetilde{\chi}^\ast : 
C^\infty(V) \to C^\infty(U)$ is the pullback of functions along the underlying 
smooth map $\widetilde{\chi} : U\to V$.
We now show that the other components ${{(\chi^\ast_V)}_{j}}^{i}$ are
relative differential operators along $\widetilde{\chi}^\ast$.
Recall, e.g.\ from Theorem 4.1.11 in Ref.~\onlinecite{Carmeli}, that the superalgebra homomorphism $\chi^\ast_V$
is uniquely specified by its action on the supercoordinates $(y^{\mu^\prime},\zeta^{a^\prime})$
of $V^{m^\prime\vert n^\prime}$. We have that
\begin{flalign}\label{eqn:nilpotentorders}
\chi_V^\ast(y^{\mu^\prime}) - \widetilde{\chi}^\ast(y^{\mu^\prime}) \in \JJ_{U^{m\vert n}}^2(U)~,\qquad
\chi_V^\ast(\zeta^{a^\prime}) \in \JJ_{U^{m\vert n}}^1(U)~,
\end{flalign}
where $\JJ_{U^{m\vert n}}^k(U)$ is the filtration explained in \eqref{eqn:filtration}, see also
\eqref{eqn:filtrationcoordinates}. For a generic $f\in  C^\infty(V) \otimes \wedge^\bullet\bbR^{n^\prime}$,
we use the component expansion $f = \sum_{I\in\bbZ_2^{n^\prime}} f_I \,\zeta^I$ and obtain
\begin{flalign}
\chi_V^\ast(f) = \sum_{I\in\bbZ_2^{n^\prime}}\chi_V^\ast(f_I) \,\chi_V^\ast(\zeta^I)~.
\end{flalign}
Using the first property in
\eqref{eqn:nilpotentorders} and Taylor expansion in the odd coordinates,
we observe that
\begin{flalign}
\chi_V^\ast(f_I) = \widetilde{\chi}^\ast(f_I) +\sum_{l=1}^{\lfloor\frac{n}{2}\rfloor} \widetilde{\chi}^\ast\big(Q_{l} (f_I)\big)\,\lambda_{2l}~,
\end{flalign}
where $Q_{l}$ is a differential operator of order $l$ and $\lambda_{2l} \in\wedge^{2l} \bbR^{n}$.
Using also the second property in \eqref{eqn:nilpotentorders} and the fact that the odd coordinates $\theta^a$
on $U^{m\vert n}$ are nilpotent, we obtain
\begin{flalign}\label{eqn:pullbackcomponents}
{{(\chi^\ast_V)}_{j}}^{i} = \begin{cases}
\widetilde{\chi}^\ast\circ {(D^\chi)_{j}}^{i} &~,~~\text{if $j-i\geq 0$ even}~,\\
0 &~,~~\text{else}~.
\end{cases}
\end{flalign}
Here $ {(D^\chi)_{j}}^{i}$ are matrices of differential operators of order $\frac{j-i}{2}$.
In summary, we have shown that, for any supermanifold morphism
$\chi : U^{m\vert n} \to V^{m^\prime\vert n^\prime}$ between two
superdomains, the corresponding superalgebra homomorphism $\chi^\ast_V$ 
can be factorized uniquely as
\begin{flalign}
\chi^\ast_{V} = \widetilde{\chi}^\ast\circ D^\chi~,
\end{flalign}
where $D^\chi$ is a matrix of differential operators.
\sk

We now define the mapping $\Pol \chi$ in \eqref{eqn:poldiagram} component-wise by
\begin{flalign}\label{eqn:polmapping}
\nn {{(\Pol \chi)}_{j}}^i : \TT^\ast_{\widetilde{\chi}(x)} V \times\wedge^i\bbC^{n^\prime} &\longrightarrow
\TT^\ast_{x} U \times\wedge^j\bbC^{n}~,\\
\big(\widetilde{\chi}(x),k^\prime, \lambda^\prime\big) &\longmapsto
\begin{cases}
\big(x,\TT^\ast\widetilde{\chi}(k^\prime) , \sigma_{\frac{j-i}{2}}({(D^\chi)_{j}}^{i})(\widetilde{\chi}(x), k^\prime)\big(\lambda^\prime\big)\big)& ,~\text{if $j-i\geq 0$ even}~,\\
\big(x,\TT^\ast\widetilde{\chi}(k^\prime) ,0\big)&,~\text{else}~,
\end{cases}
\end{flalign}
where $\sigma_{l}$ denotes the principal symbol of a differential operator of order $l$.
\sk

Given now two supermanifold morphisms
$\chi : U^{m\vert n}\to V^{m^\prime\vert n^\prime}$
and $\chi^\prime : V^{m^\prime\vert n^\prime}\to W^{m^{\prime\prime}\vert n^{\prime\prime}}$,
we can form  the composition $\chi^\prime \circ \chi : U^{m\vert n}\to W^{m^{\prime\prime}\vert n^{\prime\prime}}$.
From $(\chi^\prime \circ \chi)^\ast_{W} = \chi^\ast_V\circ {\chi^\prime}^\ast_W$,
it follows that the components satisfy
\begin{flalign}
{{((\chi^\prime \circ \chi)^\ast_{W})}_{j}}^i = \sum_{h=0}^{n^\prime} {{( \chi^\ast_V)}_j}^h\circ  {{({\chi^\prime}^\ast_W)}_h}^i~
\end{flalign}
and hence
\begin{flalign}
\widetilde{\chi}^\ast\circ \widetilde{\chi^\prime}^\ast\circ {(D^{\chi^\prime\circ \chi})_{i+2l}}^{i} = \sum_{j=0}^l
\widetilde{\chi}^\ast\circ {(D^{\chi})_{i+2l}}^{i+2j}\circ \widetilde{\chi^\prime}^\ast\circ  {(D^{\chi^\prime})_{i+2j}}^{i}~,
\end{flalign}
for the non-vanishing components of ${{((\chi^\prime \circ \chi)^\ast_{W})}_{j}}^i$.
Combining this with \eqref{eqn:polmapping} and the multiplicativity of principal symbols, 
it is easy to check that the polarization mapping in \eqref{eqn:poldiagram}
is (contravariantly) compatible with compositions, i.e.\
\begin{flalign}\label{eqn:functoriality}
\Pol(\chi^\prime\circ \chi) = (\Pol\chi)\circ (\Pol\chi^\prime)~.
\end{flalign}
Moreover, by definition it is clear that $\Pol \id_{U^{m\vert n}} = \id_{\Pol U^{m\vert n}}$.
\sk

Because of this result, the concept of polarization bundle globalizes from superdomains to supermanifolds:
Let $X = (\widetilde{X}, \OO_X)$ be any $m\vert n$-dimensional supermanifold 
and choose an open cover $\{U_\alpha\subseteq \widetilde{X}\}$
and isomorphisms 
\begin{flalign}
\rho_\alpha : X\vert_{U_\alpha}^{} \longrightarrow {W_{\alpha}}^{m\vert n}\subseteq \bbR^{m\vert n}
\end{flalign}
 to superdomains, i.e.\ a superatlas. In all overlaps $U_{\alpha\beta} := U_\alpha\cap U_{\beta}$ this gives
rise to  transition supermanifold  morphisms
\begin{flalign}\label{eqn:transitionSMAN}
\chi_{\alpha\beta} := \rho_\beta \circ \rho_\alpha^{-1}: {W_{\alpha}}^{m\vert n}\vert_{\widetilde{\rho_\alpha}(U_{\alpha\beta})}^{} \longrightarrow
{W_{\beta}}^{m\vert n}\vert_{\widetilde{\rho_\beta}(U_{\alpha\beta})}^{}  ~,
\end{flalign}
which satisfy $\chi_{\alpha\alpha} =\id_{{W_{\alpha}}^{m\vert n}}$ for all $\alpha$
as well as the cocycle condition $\chi_{\beta\gamma}\circ \chi_{\alpha\beta} = \chi_{\alpha\gamma}$
on all triple overlaps $U_{\alpha\beta\gamma} := U_\alpha\cap U_{\beta}\cap U_{\gamma}$. 
In any superchart ${W_{\alpha}}^{m\vert n}$ 
we take the trivial polarization bundle $\Pol {W_{\alpha}}^{m\vert n}$  from \eqref{eqn:trivialpolbundle}.
The global polarization bundle $\Pol X$ on the supermanifold $X$ is then given by
gluing these local bundles via the transition functions
$g_{\alpha\beta} := \Pol\chi_{\beta\alpha}$;
the cocycle condition for the $g_{\alpha\beta}$ follows from \eqref{eqn:functoriality}.
It is important to stress that, even though the local polarization bundles \eqref{eqn:trivialpolbundle}
look like Grassmann algebra bundles, the transition functions $g_{\alpha\beta}$
in general {\em do not} preserve the product structure and the $\bbZ$-grading
on the fibers -- note the outer-diagonal terms in \eqref{eqn:polmapping}, which depend on $k$.
However, the coarser $\bbZ_2$-grading on the fibers of the local bundles
is preserved by the transition functions. Hence the polarization bundle 
$\pi : \Pol X \to \TT^\ast\widetilde{X}$ is a complex super vector bundle for any supermanifold $X = (\widetilde{X},\OO_X)$.


\section{\label{sec:superpseudos}Super pseudodifferential operators}
We introduce super pseudodifferential operators
on supermanifolds and define their super principal symbols. As in the case of a manifold,
the definition is local, and we first consider the case where the 
supermanifold is a superdomain ${U}^{m\vert n}\subseteq \bbR^{m\vert n}$.
The main definitions in this section are taken from Ref.~\onlinecite{Rempel} (see in particular
Sections 7 and 8). However, we will study the properties of super pseudodifferential operators
in more detail  and also provide interesting examples from supergeometric field theory.
\sk

A linear map
\begin{flalign}
A : C_{\mathrm{c}}^\infty(U)\otimes \wedge^\bullet\bbR^n\longrightarrow C^\infty(U)\otimes \wedge^\bullet\bbR^n
\end{flalign}
is called a {\em super pseudodifferential operator} on ${U}^{m\vert n}$ if all its components
${A_{j}}^i : C_{\mathrm{c}}^\infty(U)\otimes \wedge^i\bbR^n \to C^\infty(U)\otimes \wedge^j\bbR^n$
are (matrices of) pseudodifferential operators on $U\subseteq \bbR^m$.
In the following all pseudodifferential operators are implicitly assumed to be properly supported and classical, 
see e.g.\ Ref.~\onlinecite{Shubin} for the relevant definitions. Recall, in particular, that 
properly supported pseudodifferential operators map compactly supported functions
to compactly supported functions, hence they can be composed. 
The composition is again a properly supported pseudodifferential operator. 
Given any supermanifold {\em isomorphism} $\chi : {U}^{m\vert n} \to {V}^{m\vert n}$
and a super pseudodifferential operator $A$ on ${U}^{m\vert n}$,
consider the linear map
\begin{flalign}\label{eqn:pseudotransform}
{\chi_V^\ast}^{-1} \circ A\circ \chi_V^\ast: C_{\mathrm{c}}^\infty(V)\otimes \wedge^\bullet\bbR^n\longrightarrow C^\infty(V)\otimes \wedge^\bullet\bbR^n~.
\end{flalign}
It defines a super pseudodifferential operator on ${V}^{m\vert n}$
because the components of  $\chi_V^\ast$ and its inverse are both (matrices of) 
relative differential operators, cf.\ \eqref{eqn:pullbackcomponents}. 
\begin{defi}\label{defi:pseudosymbol}
We say that a super pseudodifferential operator $A$ on ${U}^{m\vert n}$ 
is of order $l$ if its components ${A_{j}}^i$ are (matrices of) pseudodifferential operators on $U$ 
of order $\frac{j-i}{2} +l$, i.e.,
\begin{flalign}\label{sPSIDO}
s\Psi\mathrm{DO}^{l}({U}^{m\vert n}) := \left\{A :C_{\mathrm{c}}^\infty(U)\otimes \wedge^\bullet\bbR^n\to 
C^\infty(U)\otimes \wedge^\bullet\bbR^n :   {A_{j}}^i\in \Psi\mathrm{DO}^{\frac{j-i}{2}+l}(U) \right\}~.
\end{flalign}
The super principal symbol of $A\in s\Psi\mathrm{DO}^l(U^{m\vert n})$ is the super vector bundle map
\begin{flalign}
\sigma_l(A) : \Pol U^{m\vert n} \longrightarrow \Pol U^{m\vert n}~
\end{flalign}
with components given by
\begin{flalign}
\nn {{(\sigma_l(A))}_j}^i : \TT^\ast U \times \wedge^i \bbC^n &\longrightarrow \TT^\ast U\times \wedge^j\bbC^n~,\\
(x,k,\lambda) &\longmapsto \big(x,k,\sigma_{\frac{j-i}{2}+l}({A_{j}}^i)(x,k) \big(\lambda\big)\big)~,
\end{flalign}
where $\sigma_{\frac{j-i}{2}+l}({A_j}^i)$ is the ordinary principal symbol of order $\frac{j-i}{2}+l$ of ${A_j}^i$.
\end{defi}
\begin{ex}\label{ex:supermorphaspseudo}
Let $\chi : U^{m\vert n} \to V^{m\vert n}$ be a supermanifold isomorphism
between two superdomains, and consider the unique factorization $\chi_V^\ast = \widetilde{\chi}^\ast\circ D^{\chi}$
given in \eqref{eqn:pullbackcomponents}. Then $D^\chi$ is a super pseudodifferential operator
of order $0$, i.e.\ $D^\chi\in s\Psi\mathrm{DO}^0(V^{m\vert n})$.
In the case where $U=V$ and $\widetilde{\chi} = \id_U$, the super principal symbol
of $D^\chi$ is the polarization mapping \eqref{eqn:polmapping}, i.e.\ $\sigma_0(D^\chi) = \Pol \chi$.
\end{ex}

We collect some useful properties of super pseudodifferential operators
and their super principal symbols. The proofs of these statements follow easily from our definitions
and are omitted.
\begin{lem}\label{lem:pseudoproperties}
Let $A\in s\Psi\mathrm{DO}^{l}({U}^{m\vert n})$ and $B\in s\Psi\mathrm{DO}^{l^\prime}({U}^{m\vert n})$.
Then the following statements hold true:
\begin{itemize}
\item[a)] $B\circ A \in s\Psi\mathrm{DO}^{l+l^\prime}({U}^{m\vert n})$.
\item[b)] If $\chi : {U}^{m\vert n} \to {V}^{m\vert n}$ is a supermanifold isomorphism,
then ${\chi_V^{\ast}}^{-1} \circ A\circ \chi_V^\ast\in s\Psi\mathrm{DO}^{l}({V}^{m\vert n})$.
\end{itemize}
\end{lem}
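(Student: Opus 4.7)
My plan for both parts is to reduce the statements to the corresponding facts for classical pseudodifferential operators on $U$ and $V$, together with part (a) itself.

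For part (a), I argue componentwise using the decomposition by Grassmann degree. The components of the composition are given by $(B \circ A)_j^i = \sum_{h=0}^n B_j^h \circ A_h^i$, and each summand composes a properly supported classical pseudodifferential operator on $U$ of order $\frac{j-h}{2} + l'$ with one of order $\frac{h-i}{2} + l$. The standard composition theorem for classical pseudodifferential operators thus places each summand, and hence their sum, in $\Psi\mathrm{DO}^{\frac{j-i}{2}+l+l'}(U)$, with proper support preserved. This is exactly the order bound required by Definition \ref{defi:pseudosymbol}.

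For part (b), I exploit the factorization $\chi_V^* = \widetilde{\chi}^* \circ D^\chi$ from \eqref{eqn:pullbackcomponents} to rewrite
\[
(\chi_V^*)^{-1} \circ A \circ \chi_V^* \;=\; (D^\chi)^{-1} \circ B \circ D^\chi,
\]
where $B := (\widetilde{\chi}^{-1})^* \circ A \circ \widetilde{\chi}^*$. Since $\widetilde{\chi}$ is a diffeomorphism, the classical diffeomorphism invariance of properly supported classical pseudodifferential operators, applied componentwise to $B_j^i = (\widetilde{\chi}^{-1})^* \circ A_j^i \circ \widetilde{\chi}^*$, shows that $B \in s\Psi\mathrm{DO}^l(V^{m\vert n})$. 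By Example \ref{ex:supermorphaspseudo}, the operator $D^\chi$ itself lies in $s\Psi\mathrm{DO}^0(V^{m\vert n})$. Applying the same factorization to the inverse supermanifold morphism $\chi^{-1}$ yields $(\chi_V^*)^{-1} = (\widetilde{\chi}^{-1})^* \circ D^{\chi^{-1}}$ with $D^{\chi^{-1}}$ a matrix of differential operators; comparing this with $(\chi_V^*)^{-1} = (D^\chi)^{-1} \circ (\widetilde{\chi}^{-1})^*$ gives $(D^\chi)^{-1} = (\widetilde{\chi}^{-1})^* \circ D^{\chi^{-1}} \circ \widetilde{\chi}^*$, which is again a matrix of differential operators on $V$ since conjugation by a diffeomorphism pullback preserves the class of differential operators. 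Hence $(D^\chi)^{-1} \in s\Psi\mathrm{DO}^0(V^{m\vert n})$, and two applications of part (a) to the triple composition $(D^\chi)^{-1} \circ B \circ D^\chi$ produce the desired element of $s\Psi\mathrm{DO}^l(V^{m\vert n})$.

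The only mildly delicate point is the identification of $(D^\chi)^{-1}$ as a super pseudodifferential operator of order zero; everything else is routine bookkeeping combining the Grassmann-degree decomposition, classical diffeomorphism invariance of $\Psi\mathrm{DO}$s on the underlying manifolds, and part (a).
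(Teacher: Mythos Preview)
The paper omits the proof of this lemma entirely, remarking only that ``the proofs of these statements follow easily from our definitions and are omitted.'' Your argument is a correct and natural way to supply the details: part (a) is exactly the componentwise computation the paper has in mind, and your handling of part (b) via the factorization $\chi_V^\ast = \widetilde{\chi}^\ast \circ D^\chi$, together with the identification $(D^\chi)^{-1} = (\widetilde{\chi}^{-1})^\ast \circ D^{\chi^{-1}} \circ \widetilde{\chi}^\ast$ obtained from the inverse morphism, cleanly reduces the claim to classical diffeomorphism invariance plus part (a). There is nothing to compare against, and no gap in your reasoning.
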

\begin{lem}\label{lem:symbolproperties}
Let $A\in s\Psi\mathrm{DO}^{l}({U}^{m\vert n})$ and $B\in s\Psi\mathrm{DO}^{l^\prime}({U}^{m\vert n})$.
Then the following statements hold true:
\begin{itemize}
\item[a)] $\sigma_{l+l^\prime} (B \circ A) = \sigma_{l^\prime}(B)\circ \sigma_{l}(A)$.
\item[b)]  If $\chi : {U}^{m\vert n} \to {V}^{m\vert n}$ is a supermanifold isomorphism,
then 
\begin{flalign}
\sigma_{l}\big({\chi_V^{\ast}}^{-1} \circ A\circ \chi_V^\ast\big) = 
(\Pol \chi^{-1})\circ \sigma_{l}(A)\circ (\Pol \chi)~.
\end{flalign}
\end{itemize}
\end{lem}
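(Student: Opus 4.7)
The plan is to establish part (a) by an elementary component-wise computation, and then reduce part (b) to an application of part (a) together with the classical transformation law for ordinary principal symbols under diffeomorphisms.

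For part (a) I would expand the composition componentwise as $(B\circ A)_j^i=\sum_{h=0}^n B_j^h\circ A_h^i$, where by Definition~\ref{defi:pseudosymbol} each summand is a composition of classical pseudodifferential operators of orders $\frac{h-i}{2}+l$ and $\frac{j-h}{2}+l^\prime$, hence of common total order $\frac{j-i}{2}+l+l^\prime$. Multiplicativity of principal symbols under composition and additivity within a fixed order then yield
\begin{flalign*}
\sigma_{\frac{j-i}{2}+l+l^\prime}\bigl((B\circ A)_j^i\bigr)(x,k)=\sum_{h=0}^n\sigma_{\frac{j-h}{2}+l^\prime}(B_j^h)(x,k)\circ\sigma_{\frac{h-i}{2}+l}(A_h^i)(x,k),
\end{flalign*}
which by Definition~\ref{defi:pseudosymbol} is precisely the $(i,j)$-component of $\sigma_{l^\prime}(B)\circ\sigma_l(A)$.

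For part (b), the starting point is the factorization $\chi^\ast_V=\widetilde{\chi}^\ast\circ D^\chi$ from \eqref{eqn:pullbackcomponents} together with its analog for $\chi^{-1}$, which yield
\begin{flalign*}
{\chi^\ast_V}^{-1}\circ A\circ\chi^\ast_V=(D^\chi)^{-1}\circ\bigl((\widetilde{\chi}^\ast)^{-1}\circ A\circ\widetilde{\chi}^\ast\bigr)\circ D^\chi.
\end{flalign*}
By Example~\ref{ex:supermorphaspseudo} we have $D^\chi\in s\Psi\mathrm{DO}^0(V^{m\vert n})$, and since $D^\chi$ is lower triangular in the Grassmann grading with multiplication by an invertible matrix on the diagonal and the grading is bounded above by $n$, its inverse is a finite Neumann series in differential operators and therefore also lies in $s\Psi\mathrm{DO}^0(V^{m\vert n})$. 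The middle factor $(\widetilde{\chi}^\ast)^{-1}\circ A\circ\widetilde{\chi}^\ast$ is the componentwise pushforward of $A$ by the diffeomorphism $\widetilde{\chi}$, whose ordinary principal symbols are, by the classical transformation law, $\sigma_{\frac{j-i}{2}+l}(A_j^i)(\widetilde{\chi}^{-1}(y),\TT^\ast\widetilde{\chi}(\eta))$ at $(y,\eta)\in\TT^\ast V$.

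Applying part (a) twice to the resulting triple composition expresses $\sigma_l\bigl({\chi^\ast_V}^{-1}\circ A\circ\chi^\ast_V\bigr)$ as a composition of super principal symbols, which I would then compare pointwise with $(\Pol\chi^{-1})\circ\sigma_l(A)\circ(\Pol\chi)$. Using \eqref{eqn:polmapping} to read off the fiber actions of $\Pol\chi$ and $\Pol\chi^{-1}$ in terms of the principal symbols of $D^\chi$ and $D^{\chi^{-1}}$, the internal Grassmann parts match immediately, and the covector arguments align because $\TT^\ast\widetilde{\chi}^{-1}\circ\TT^\ast\widetilde{\chi}=\id$. The main obstacle is the bookkeeping of base points and covectors through the three factors, but the functoriality identity~\eqref{eqn:functoriality} together with $\Pol\id=\id$ already ensures $(\Pol\chi^{-1})\circ(\Pol\chi)=\id$ on fibers, so the verification reduces to a mechanical matching.
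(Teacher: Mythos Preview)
Your proposal is correct and carries out precisely the kind of straightforward verification the paper alludes to when it says ``the proofs of these statements follow easily from our definitions and are omitted.'' The componentwise argument for (a) is the only reasonable one, and for (b) your factorization through $(D^\chi)^{-1}\circ\bigl((\widetilde{\chi}^\ast)^{-1}\circ A\circ\widetilde{\chi}^\ast\bigr)\circ D^\chi$ together with the classical transformation law is exactly the intended route. One small point worth tightening: you assert that the diagonal blocks $(D^\chi)_i{}^i$ are invertible multiplication operators, which is true but deserves a word of justification---it follows because $\chi$ is a supermanifold \emph{isomorphism}, so the induced map on the associated graded $\mathrm{gr}(\chi^\ast_V)$ is an isomorphism degree by degree; alternatively, one can avoid the Neumann series entirely by observing directly that $(D^\chi)^{-1}=(\widetilde{\chi}^{-1})^\ast\circ D^{\chi^{-1}}\circ\widetilde{\chi}^\ast$, which follows from composing the factorizations of $\chi^\ast_V$ and $(\chi^{-1})^\ast_U$.
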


Super pseudodifferential operators and their super principal symbols are
easily globalized to supermanifolds by slightly adapting the globalization
procedure for the pseudodifferential operators on manifolds, see e.g.\ Chapter I, Section 5 in Ref.~\onlinecite{Treves}. 
Let  $X = (\widetilde{X}, \OO_X)$ be an $m\vert n$-dimensional supermanifold and 
$\mathcal{O}_{X,\mathrm{c}}(\widetilde{X})$ the space of
compactly supported global sections of the structure sheaf.
Consider a maximal superatlas $\rho_{\alpha} : X\vert_{U_\alpha}^{} \to
{W_{\alpha}}^{m\vert n}$. 
A super pseudodifferential operator $A\in s\Psi\mathrm{DO}^l(X)$ of order $l$ on $X$
is a continuous linear map $A : \mathcal{O}_{X,\mathrm{c}}(\widetilde{X})\to\mathcal{O}_{X}(\widetilde{X})$ 
such that, for every superchart ${W_{\alpha}}^{m\vert n}$, the linear map
$A_{\alpha}$ defined by the diagram
\begin{flalign}\label{localPseudo}
\xymatrix{
\ar[d]_-{\rho_{\alpha}^\ast}C^\infty_{\mathrm{c}}(W_{\alpha})\otimes \wedge^\bullet \bbR^n \ar[rrr]^-{A_{\alpha}}&&& C
^\infty(W_{\alpha})\otimes \wedge^\bullet \bbR^n\\
\ar[r]_-{\mathrm{ext}_{\widetilde{X},U_{\alpha}}}\OO_{X,\mathrm{c}}(U_{\alpha})&
\OO_{X,\mathrm{c}}(\widetilde{X})\ar[r]_-{A} & 
 \OO_{X}(\widetilde{X})\ar[r]_-{\res_{U_{\alpha},\widetilde{X}}} & \OO_{X}(U_{\alpha}) \ar[u]_-{{\rho_{\alpha}^\ast}^{-1}}
}
\end{flalign}
is an element in $s\Psi\mathrm{DO}^l({W_{\alpha}}^{m\vert n})$. Here $\mathrm{ext}$ denotes
the extension (by zero) maps for compactly supported sections.
To each $A\in s\Psi\mathrm{DO}^l(X)$ we associate a super principal symbol, 
which is a super vector bundle morphism
\begin{flalign}\label{eqn:globalsymbol}
\sigma_{l}(A) : \Pol X\longrightarrow \Pol X~.
\end{flalign}
Explicitly, the super principal symbol $\sigma_{l}(A)$ is constructed by gluing together the collection of all local super principal 
symbols $\sigma_l(A_\alpha)$ of the operators $A_\alpha$ in \eqref{localPseudo}. This is consistent 
on account of Lemma \ref{lem:symbolproperties} b).
\sk

To study the singularities of distributions, the notion of ellipticity is crucial.
\begin{defi}\label{def:ellipticpseudo}
We say that a super pseudodifferential operator $E\in s\Psi \mathrm{DO}^l(X)$ 
is {\em elliptic} if the super principal symbol $\sigma_l(E)$ is invertible on $\TT^\ast\widetilde{X}\setminus {\bf 0}$.
\end{defi}

Many properties of elliptic pseudodifferential operators on ordinary manifolds 
are still valid in our framework. In particular, we obtain
\begin{lem}\label{lem:inverseOP}
Let  $E\in s\Psi \mathrm{DO}^l(X)$ be an elliptic super pseudodifferential operator. 
Then there exists a super pseudodifferential operator $F \in s\Psi \mathrm{DO}^{-l}(X)$ such that
\begin{flalign}
E\circ  F -  \id  \in s\Psi \mathrm{DO}^{-\infty}(X) \quad \text{and} \quad  F \circ E - \id  \in s\Psi \mathrm{DO}^{-\infty}(X)~,
\end{flalign}
where $s\Psi \mathrm{DO}^{-\infty}(X) := \bigcap_{l\in\bbR} s\Psi \mathrm{DO}^{l}(X)$. $F$ is called a parametrix for $E$.
\end{lem}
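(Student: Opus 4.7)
The plan is to mimic the classical construction of a parametrix for an elliptic pseudodifferential operator, working first in a superchart and then globalizing. Fix a superchart $\rho_\alpha : X\vert_{U_\alpha} \to W_\alpha^{m\vert n}$ and write $E_\alpha \in s\Psi\mathrm{DO}^l(W_\alpha^{m\vert n})$ for the local representative from diagram \eqref{localPseudo}. The guiding idea is to find $F_0 \in s\Psi\mathrm{DO}^{-l}(W_\alpha^{m\vert n})$ whose super principal symbol is the fiber-wise inverse of $\sigma_l(E_\alpha)$, and then to correct the resulting error by a Neumann-type asymptotic summation.

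First I would verify that $\sigma_l(E_\alpha)^{-1}$ is indeed the super principal symbol of an order $-l$ operator. The components ${(\sigma_l(E_\alpha))_j}^i(x,k)$ are homogeneous in $k$ of degree $\frac{j-i}{2}+l$, hence under the fiber-wise rescaling $S_t$ of $\wedge^\bullet\bbC^n$ that acts as $t^{i/2}$ on $\wedge^i\bbC^n$ one has
\begin{flalign*}
\sigma_l(E_\alpha)(x,tk) \;=\; t^{l}\,S_t\,\sigma_l(E_\alpha)(x,k)\,S_t^{-1}~,
\end{flalign*}
so $\sigma_l(E_\alpha)(x,k)^{-1}$ scales with $t^{-l}$ under the conjugated action, i.e.\ its $(j,i)$ component is homogeneous of degree $\frac{j-i}{2}-l$. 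This is precisely the homogeneity required by Definition \ref{defi:pseudosymbol} for an element of $s\Psi\mathrm{DO}^{-l}(W_\alpha^{m\vert n})$, so by choosing on each component a classical symbol with this prescribed principal part I obtain such an $F_0$.

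Next I would run the standard parametrix iteration. By Lemma \ref{lem:symbolproperties} a), $\sigma_0(E_\alpha\circ F_0) = \sigma_l(E_\alpha)\circ\sigma_{-l}(F_0) = \id$, and since vanishing of every component of the order $0$ super principal symbol of a local super pseudodifferential operator forces each component pseudodifferential operator to drop its order by one, we obtain $R_\alpha := \id - E_\alpha\circ F_0 \in s\Psi\mathrm{DO}^{-1}(W_\alpha^{m\vert n})$. By Lemma \ref{lem:pseudoproperties} a), the iterated compositions satisfy $R_\alpha^k \in s\Psi\mathrm{DO}^{-k}(W_\alpha^{m\vert n})$, and applying the classical asymptotic summation component-wise — which preserves membership in each $s\Psi\mathrm{DO}^{\bullet}$ class because this class is defined by component-wise symbol conditions — yields $S_\alpha \sim \sum_{k\geq 0} R_\alpha^k$ with $S_\alpha - \sum_{k=0}^{N-1}R_\alpha^k \in s\Psi\mathrm{DO}^{-N}$ for every $N$. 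Then $F_\alpha^{\mathrm{right}} := F_0\circ S_\alpha \in s\Psi\mathrm{DO}^{-l}(W_\alpha^{m\vert n})$ is a right parametrix, i.e.\ $E_\alpha\circ F_\alpha^{\mathrm{right}} - \id \in s\Psi\mathrm{DO}^{-\infty}$. The same procedure applied with left and right reversed produces a left parametrix $F_\alpha^{\mathrm{left}}$, and the standard identity $F_\alpha^{\mathrm{left}} - F_\alpha^{\mathrm{right}} = F_\alpha^{\mathrm{left}}\circ(\id - E_\alpha\circ F_\alpha^{\mathrm{right}}) - (\id - F_\alpha^{\mathrm{left}}\circ E_\alpha)\circ F_\alpha^{\mathrm{right}} \in s\Psi\mathrm{DO}^{-\infty}$ shows that either one serves as a two-sided parametrix modulo $s\Psi\mathrm{DO}^{-\infty}(W_\alpha^{m\vert n})$.

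Finally I would globalize via a partition of unity $\{\varphi_\alpha\}$ subordinate to the cover $\{U_\alpha\}$ of $\widetilde{X}$ together with cut-offs $\psi_\alpha$ with $\psi_\alpha=1$ on $\supp\varphi_\alpha$ and $\supp\psi_\alpha\subseteq U_\alpha$; viewing these as even superfunctions via $\iota^\ast_{\widetilde{X},X}$ and setting $F := \sum_\alpha \psi_\alpha\,F_\alpha\,\varphi_\alpha$ (pulled back and extended through diagram \eqref{localPseudo}), the local errors $E_\alpha\circ F_\alpha - \id$ and $F_\alpha\circ E_\alpha - \id$ being smoothing propagates, after the usual bookkeeping with commutators of pseudodifferential operators and smooth cut-offs, to $E\circ F - \id$ and $F\circ E - \id$ lying in $s\Psi\mathrm{DO}^{-\infty}(X)$. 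The main technical obstacle is ensuring that asymptotic summation and cut-off gluing respect the component-wise order structure in \eqref{sPSIDO}; this reduces, however, to a routine check because all operations are performed block-wise and the relative order shifts $\frac{j-i}{2}$ are preserved under composition and under multiplication by even scalar cut-offs.
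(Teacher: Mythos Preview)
Your proposal is correct and follows exactly the approach the paper has in mind: the paper's own ``proof'' is a one-line reference to the classical parametrix construction (Theorem 5.1 in Shubin), and you have faithfully spelled out that construction in the super setting, including the homogeneity check for $\sigma_l(E_\alpha)^{-1}$, the Neumann iteration, asymptotic summation, and globalization. One small notational slip: $\iota^\ast_{\widetilde{X},X}$ goes the wrong way (it is the quotient $\OO_X\to\OO_X/\JJ_X$), but in a superchart the inclusion $C^\infty(W_\alpha)\hookrightarrow C^\infty(W_\alpha)\otimes\wedge^\bullet\bbR^n$ you need is the obvious one, so this does not affect the argument.
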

\begin{proof}
The proof is as in the case of ordinary manifolds, see e.g.\ Theorem 5.1 in Ref.~\onlinecite{Shubin}.
\end{proof}

We shall now give examples of super differential and super pseudodifferential operators $A\in s\Psi\mathrm{DO}^l(X)$
which have their origin in supersymmetric field theory.
\begin{ex}\label{ex:elliptic2}
Let $X = \bbR^{1\vert 1}$ be the superline. The dynamics of a superparticle on $X$ 
is governed 
by a super differential operator, which in global supercoordinates $(t,\theta)$ on  $\bbR^{1\vert 1}$ reads as
\begin{flalign}
\nn P : C^\infty(\bbR)\otimes \wedge^\bullet\bbR &\longrightarrow C^\infty(\bbR)\otimes \wedge^\bullet\bbR~,\\
f= f_0 + f_1\,\theta &\longmapsto \partial_t f_1 + \partial_t^2 f_0 \,\theta~,
\end{flalign}
cf.\ Section 8.1 in Ref.~\onlinecite{HHS}. In our component notation, the operator $P$ is given by
\begin{flalign}
P = \begin{pmatrix}
0 & \partial_t \\
\partial_t^2 & 0
\end{pmatrix}~.
\end{flalign}
Notice that $P\in s\Psi\mathrm{DO}^{\frac{3}{2}}(\bbR^{1\vert 1})$. 
Its super principal symbol
\begin{flalign}
\sigma_{\frac{3}{2}}(P)(t,k) = \begin{pmatrix}
0 & \ii k \\
-k^2 & 0
\end{pmatrix}
\end{flalign}
is invertible for all $(t,k) \in \TT^\ast \bbR \setminus {\bf 0}$, hence $P$ is elliptic. Specifically, the inverse is
	\begin{flalign}
		\sigma_{-\frac{3}{2}}(F)(t,k):=\sigma_{\frac{3}{2}}(P)^{-1}(t,k) = \begin{pmatrix}
			0 &  -\frac{1}{k^2} \\
			-\frac{\ii}{k} & 0
		\end{pmatrix}~.
	\end{flalign}
	 In this case a parametrix $F$ of $P$ from Lemma \ref{lem:inverseOP} is explicitly given by
	 the integral kernel
	\begin{flalign}
		F (t,t^\prime)=\frac{1}{2}\, \begin{pmatrix}
			0 &   (t-t^\prime)\,\text{sign}(t-t^\prime)\\
			\text{sign}(t-t^\prime) & 0
		\end{pmatrix}~.
	\end{flalign} 
\end{ex}

\begin{ex}\label{ex:hyperbolic}
Let us consider $X = (M,C^\infty_M\otimes\wedge^\bullet\bbR^2)$, where
$M$ is a smooth 3-dimensional Lorentzian manifold.
The equation of motion operator  $P : \OO_X(M)\to \OO_X(M)$
of the $3\vert 2$-dimensional Wess-Zumino model on $X$ is then given in component notation by
\begin{flalign}\label{eqn:EOM3D}
P= \begin{pmatrix}
m & 0 & -1\\
0 & \ii\!\!\not \!\nabla +m & 0\\
\square & 0 & m
\end{pmatrix}~,
\end{flalign}
cf.\ Section 8.2 in Ref.~\onlinecite{HHS}. 
Here $\ii\!\!\not \!\nabla$ is the Dirac operator (on $M$), $\square$ is the d'Alembert operator on $M$
and $m\geq 0$ is a mass term. Notice further that the component notation
in \eqref{eqn:EOM3D} is in block-matrix form, 
because $\wedge^1\bbR^2 \simeq \bbR^2$ is two-dimensional; in particular, the Dirac operator
is a $2\times 2$-matrix of differential operators.
The operator $P\in s\Psi\mathrm{DO}^{1}(X)$ is of order $1$, and in local 
coordinates $x^\mu$ and $k_{\mu}$ on $\TT^\ast M$ its super principal symbol is given by
\begin{flalign}\label{eqn:EOM3Dsymbol}
\sigma_{1}(P)(x,k) = 
\begin{pmatrix}
0 & 0 & -1\\
0 & -\gamma^\mu(x)\,k_{\mu} & 0\\
- k_{\mu}k_{\nu} \,g^{\mu\nu}(x) & 0 & 0
\end{pmatrix}~.
\end{flalign} 
Using the Clifford algebra relations $\{\gamma^\mu,\gamma^\nu\} = 2\,g^{\mu\nu}$ for the gamma-matrices,
it is easy to check that $\sigma_{1}(P)(x,k)$ is invertible for all $(x,k)\in \TT^\ast M\setminus {\bf 0}$ 
which are not light-like (i.e.\ $k_\mu k_\nu g^{\mu\nu}(x) \neq 0$). More explicitly, we have
\begin{flalign}
\sigma_{1}(P)(x,k)^{-1} =  
\begin{pmatrix}
0 & 0 & -\frac{1}{ k_{\mu}k_{\nu} \,g^{\mu\nu}(x)}\\
0 & -\frac{\gamma^\mu(x)\,k_{\mu}}{{ k_{\mu}k_{\nu} \,g^{\mu\nu}(x)}} & 0\\
- 1 & 0 & 0
\end{pmatrix}.
\end{flalign}
Because $\sigma_{1}(P)(x,k)$ is invertible for non-light-like $(x,k)\in \TT^\ast M\setminus {\bf 0}$,
we call $P$ hyperbolic.
\end{ex}
\begin{rem}
Our definition of orders and super principal symbols for super pseudodifferential operators on supermanifolds
is well suited for the examples of super (pseudo-)differential operators arising in supersymmetric field theory.
This is a consequence of our definition of the polarization bundle $\pi :\Pol X \to \TT^\ast\widetilde{X}$
and in particular of the assignment of the polarization mapping defined in \eqref{eqn:polmapping}.
Rempel and Schmitt \cite{Rempel} consider also more general polarization bundles
(defined via polarization mappings different from \eqref{eqn:polmapping}), which are classified
by what they call admissible tuples. It is important to stress that
all other polarization bundles in Ref.~\onlinecite{Rempel} 
lead to an assignment of orders and super principal symbols for
super pseudodifferential operators on $X$ which is not able to detect ellipticity and hyperbolicity in
our examples above. This provides us  with a motivation for our choice of polarization bundle
given in \eqref{eqn:polmapping}.
\end{rem}


\section{\label{sec:superwavefront}Super wavefront sets}
We start with the case where the 
supermanifold is a superdomain 
${U}^{m\vert n}\subseteq \bbR^{m\vert n}$.
Then the space of superdistributions $\DD^\prime(U)\otimes \wedge^\bullet \bbR^n$ 
is the dual of $C_{\mathrm{c}}^\infty(U)\otimes \wedge^\bullet \bbR^n$, and both 
$C_{\mathrm{c}}^\infty(U)\otimes \wedge^\bullet \bbR^n$ and $C^\infty(U)\otimes \wedge^\bullet \bbR^n$ are dense sub-spaces.
We say that a superdistribution $u\in \DD^\prime(U)\otimes \wedge^\bullet \bbR^n $
is smooth if it is an element of $C^\infty(U)\otimes \wedge^\bullet \bbR^n$.
Crucially, by duality, any (properly supported) super pseudodifferential operator $A$ on ${U}^{m\vert n}$
admits a continuous extension to superdistributions,\
$A : \DD^\prime(U)\otimes \wedge^\bullet \bbR^n\to \DD^\prime(U)\otimes \wedge^\bullet \bbR^n$.
Global superdistributions on a supermanifold $X$ are obtained by gluing 
local superdistributions in a superatlas, 
via the transition morphisms $\chi_{\alpha\beta}$ given in \eqref{eqn:transitionSMAN}.
\sk

We define the super wavefront set of a superdistribution on $X$ motivated by the approach of Dencker \cite{Dencker}
for vector-valued distributions. 
The starting point is the polarization bundle $\pi: \Pol X\to\TT^\ast \widetilde X$ introduced in Section 
\ref{sec:polarizationbundles}. We denote by 
\begin{flalign}
\pi: \widehat{\mathcal{P}}^\ast X := \pi^{-1}\big(\TT^\ast \widetilde X\setminus {\bf 0}\big) \longrightarrow\TT^\ast \widetilde X \setminus {\bf 0}
\end{flalign}
the restriction of the polarization bundle to the cotangent bundle with the zero-section removed.
\begin{defi}\label{sWF}
The super wavefront set (of order $l$) of a superdistribution 
$u\in \DD^\prime(U)\otimes \wedge^\bullet \bbR^n $ 
is defined as the intersection
\begin{flalign}
s\mathrm{WF}^l(u) := \bigcap_{\mycom{A\in s\Psi\mathrm{DO}^l(U^{m\vert n})}{\text{s.t.\ }Au\text{ smooth}}}\left\{(x,k,\lambda)\in \widehat{\mathcal{P}}^\ast  U^{m\vert n} \,:\, \sigma_{l}(A)(x,k)\big(\lambda\big) =0  \right\}\subseteq \widehat{\mathcal{P}}^\ast U^{m\vert n}~.
\end{flalign}
\end{defi}

We collect some important properties of the super wavefront sets defined above.
\begin{propo}\label{propo:WFproperties}
For any $u\in \DD^\prime(U)\otimes \wedge^\bullet \bbR^n $, the following properties hold true:
\begin{itemize}
\item[a)] $s\mathrm{WF}^l(u) = s\mathrm{WF}^{l^\prime}(u)$ for all $l,l^\prime$.
\item[b)] For $u = \sum_{I\in\bbZ_2^n} u_I\,\theta^I \in \DD^\prime(U)\otimes \wedge^\bullet \bbR^n $,  
\begin{flalign}\label{projWF}
\pi\left(s\mathrm{WF}^l(u)\setminus \big( ( \TT^\ast U\setminus{\bf 0}) \times \{0\}\big)\right) 
= \bigcup_{I\in\bbZ_2^n} \mathrm{WF}(u_I)~,
\end{flalign}
where $\pi : \widehat{\mathcal{P}}^\ast U^{m\vert n} \to \TT^\ast U\setminus {\bf 0}$ 
is the projection \eqref{eqn:trivialpolbundle} and $ \mathrm{WF}(u_I)\subseteq \TT^\ast U\setminus {\bf 0}$ 
denotes the ordinary wavefront set of $u_I\in \DD^\prime(U)$. 
\end{itemize}
\end{propo}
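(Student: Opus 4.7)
For part (a), the plan is to shift orders via composition with a scalar elliptic operator. Given $l,l'\in\bbR$, I pick an ordinary scalar elliptic classical pseudodifferential operator $E_{l-l'}$ on $U$ of order $l-l'$ (for instance a power of $(1-\Delta)$) and promote it diagonally to $\widetilde E := E_{l-l'}\otimes \id_{\wedge^\bullet\bbR^n}\in s\Psi\mathrm{DO}^{l-l'}(U^{m\vert n})$, so that its super principal symbol is a nowhere-vanishing scalar multiple of the identity on every fiber. For $A\in s\Psi\mathrm{DO}^{l'}(U^{m\vert n})$ with $Au$ smooth, Lemma \ref{lem:pseudoproperties}(a) gives $\widetilde E\circ A\in s\Psi\mathrm{DO}^l(U^{m\vert n})$, one has $(\widetilde E\circ A)u=\widetilde E(Au)$ smooth, and by Lemma \ref{lem:symbolproperties}(a) the condition $\sigma_l(\widetilde E\circ A)(x,k)(\lambda)=0$ is equivalent to $\sigma_{l'}(A)(x,k)(\lambda)=0$. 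Both inclusions between $s\mathrm{WF}^l(u)$ and $s\mathrm{WF}^{l'}(u)$ follow.

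For the forward inclusion of part (b), I argue by contrapositive. Assuming $(x_0,k_0)\notin \bigcup_{I\in\bbZ_2^n}\mathrm{WF}(u_I)$ and using that $\bbZ_2^n$ is finite, there is a common open conic neighborhood $\Gamma\subseteq \TT^\ast U\setminus\mathbf{0}$ of $(x_0,k_0)$ in which every $u_I$ is microlocally smooth. I then pick a classical scalar symbol $b$ of order $l$, supported in $\Gamma$ and elliptic at $(x_0,k_0)$, and let $B$ be a properly supported quantization of $b$. Then $Bu_I\in C^\infty(U)$ for every $I$ by standard microlocal regularity. The diagonal extension $A:=B\otimes \id_{\wedge^\bullet\bbR^n}$ has vanishing off-diagonal components, lies in $s\Psi\mathrm{DO}^l(U^{m\vert n})$, satisfies $Au=\sum_I (Bu_I)\,\theta^I\in C^\infty(U)\otimes\wedge^\bullet\bbR^n$, and has $\sigma_l(A)(x_0,k_0)=\sigma_l(B)(x_0,k_0)\cdot\id$, which sends every nonzero $\lambda$ to a nonzero vector. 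Hence $(x_0,k_0,\lambda)\notin s\mathrm{WF}^l(u)$ for all $\lambda\neq 0$.

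The reverse inclusion of part (b) is where I expect the main difficulty. The plan is again a contrapositive: assume $(x_0,k_0,\lambda)\notin s\mathrm{WF}^l(u)$ for every $\lambda\neq 0$ in the finite-dimensional fiber $\wedge^\bullet\bbC^n$. For each such $\lambda$ choose $A_\lambda\in s\Psi\mathrm{DO}^l(U^{m\vert n})$ with $A_\lambda u$ smooth and $\sigma_l(A_\lambda)(x_0,k_0)(\lambda)\neq 0$. Because this nonvanishing condition on $\lambda$ is open and the unit sphere of $\wedge^\bullet\bbC^n$ is compact, I extract finitely many operators $A_1,\dots,A_N$ with $A_j u$ smooth and $\bigcap_{j}\Ker\, \sigma_l(A_j)(x_0,k_0)=\{0\}$. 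A linear-algebra left-inverse argument applied to the injective map $\lambda\mapsto (\sigma_l(A_j)(x_0,k_0)(\lambda))_j$ supplies endomorphisms $M_j$ of $\wedge^\bullet\bbC^n$ with $\sum_{j} M_j\circ \sigma_l(A_j)(x_0,k_0)=\id$; realizing each $M_j$ as the value at $(x_0,k_0)$ of the super principal symbol of some $L_j\in s\Psi\mathrm{DO}^0(U^{m\vert n})$ (e.g.\ constant-coefficient operators multiplied by suitable scalar cutoffs), I form $B:=\sum_{j}L_j\circ A_j\in s\Psi\mathrm{DO}^l(U^{m\vert n})$. Then $Bu=\sum_j L_j(A_j u)\in C^\infty(U)\otimes\wedge^\bullet\bbR^n$ and $\sigma_l(B)(x_0,k_0)=\id$, so $B$ is elliptic at $(x_0,k_0)$. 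A microlocal refinement of the parametrix construction in Lemma \ref{lem:inverseOP} then yields $F\in s\Psi\mathrm{DO}^{-l}(U^{m\vert n})$ with $FB-\id$ smoothing in a conic neighborhood of $(x_0,k_0)$, so that $u$, and in particular every component $u_I$, is microlocally smooth at $(x_0,k_0)$, contradicting $(x_0,k_0)\in \mathrm{WF}(u_I)$. The delicate points here are the localization of Lemma \ref{lem:inverseOP} to a conic neighborhood rather than a globally elliptic operator, and the bookkeeping of the half-integer order shifts for the $L_j$ on each bigraded component.
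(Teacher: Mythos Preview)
Your proof is correct and follows essentially the same strategy as the paper: compose with an elliptic operator to shift orders in (a), build a diagonal operator from scalar pseudodifferential operators for the forward inclusion of (b), and invoke a microlocal parametrix for the reverse inclusion of (b). In fact, for the reverse inclusion you are more explicit than the paper, which simply asserts the existence of a single $A$ with $Au$ smooth and $\sigma_l(A)(x_0,k_0)$ invertible without spelling out the compactness and linear-algebra combination step that you carefully provide.
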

\begin{proof}
To show item a), take any $(x,k,\lambda)\not\in s\mathrm{WF}^l(u)$. By assumption 
there exists $A\in s\Psi\mathrm{DO}^l(U^{m\vert n})$ such that 
$Au$ smooth and $\sigma_l(A)(x,k)\big(\lambda\big)\neq 0$.
Composing this $A$ with any elliptic super pseudodifferential operator
$E \in s\Psi\mathrm{DO}^{l^\prime-l}(U^{m\vert n})$ of order $l^\prime-l$,
we obtain $E\circ A\in s\Psi\mathrm{DO}^{l^\prime}(U^{m\vert n})$ such that
$EAu$ smooth and $\sigma_{l^\prime}(E\circ A)(x,k)\big(\lambda\big) =
\sigma_{l^\prime- l}(E)(x,k)\big(\sigma_{l}(A)(x,k) \big(\lambda\big)\big)\neq 0$.
Hence, $(x,k,\lambda)\not\in s\mathrm{WF}^{l^\prime}(u)$, which completes the proof.
\sk

Item b): We prove the inclusion ``$\subseteq$'' by contradiction. 
Suppose that there exists $(x,k,\lambda)\in 
s\mathrm{WF}^l(u)\setminus ( ( \TT^\ast U\setminus{\bf 0}) \times \{0\})$
such that $(x,k)\not\in \bigcup_{I \in \mathbb{Z}_2^n} \mathrm{WF}(u_I)$.
The latter condition implies that, for each $I\in \mathbb{Z}_2^n$,
there exists $A_I\in \Psi\mathrm{DO}^l(U)$ such that $A_I u_I$ is smooth and
$\sigma_l(A_I)(x,k)\neq 0$. We define $A\in s\Psi\mathrm{DO}^l(U^{m\vert n})$
by placing the $A_I$ in their corresponding diagonal entry of the matrix 
and setting all other entries to zero.
By construction, we have that $A u$ is smooth and 
that the super principal symbol $\sigma_l(A)(x,k)$ is invertible. 
This implies that $\lambda=0$ and leads to a contradiction.
\sk

We prove the inclusion ``$\supseteq$'' by contradiction. 
Suppose that there exists an element $(x,k)\in \bigcup_{I \in \mathbb{Z}_2^n} \mathrm{WF}(u_I)$
such that $(x,k,\lambda)\not \in s\mathrm{WF}^l(u)\setminus ( ( \TT^\ast U\setminus{\bf 0}) \times \{0\})$,
for any $\lambda\neq 0$. Then there exists $A\in s\Psi\mathrm{DO}^l(U^{m\vert n})$ such that
$Au $ is smooth and $\sigma_l(A)(x,k)$ is invertible at $(x,k)$. Thus, by a straightforward refinement 
of Lemma \ref{lem:inverseOP}, as in Proposition 6.9 in Ref.~\onlinecite{Treves} we construct a 
microlocal parametrix $F\in s\Psi\mathrm{DO}^{-l}(U^{m\vert n})$. From the existence of this microlocal 
parametrix $F$ we conclude that all components $u_I$ of $u$ are smooth at $(x,k)$. Hence 
$(x,k)\notin \bigcup_{I \in \mathbb{Z}_2^n} \mathrm{WF}(u_I)$, which is a contradiction.
\end{proof}
\begin{rem}
On account of item a) of the previous lemma, we drop the label $l$ and denote the super wavefront set by $s\mathrm{WF}(u)$.
\end{rem}

\begin{cor}\label{cor:minsWF}
$u\in \DD^\prime(U)\otimes \wedge^\bullet \bbR^n$ is smooth if and only if
$s\mathrm{WF}(u) = ( \TT^\ast U\setminus {\bf 0}) \times \{0\}$.
\end{cor}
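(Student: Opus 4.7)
The plan is a short two-direction argument, most of whose content is already captured by Proposition \ref{propo:WFproperties}(b).

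First I would record a trivial observation independent of $u$: for every $A\in s\Psi\mathrm{DO}^l(U^{m\vert n})$ the super principal symbol $\sigma_l(A)(x,k)$ is linear in $\lambda$, so $\sigma_l(A)(x,k)(0)=0$ identically. Consequently the set $\{(x,k,\lambda)\in \widehat{\mathcal{P}}^\ast U^{m\vert n} : \sigma_l(A)(x,k)(\lambda)=0\}$ always contains $(\TT^\ast U\setminus{\bf 0})\times\{0\}$, and hence so does the intersection in Definition \ref{sWF}. This gives the inclusion $(\TT^\ast U\setminus{\bf 0})\times\{0\}\subseteq s\mathrm{WF}(u)$ for any $u$, providing one half of the equality in both directions for free.

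For ``$\Rightarrow$'', assume $u$ is smooth. For each $(x,k,\lambda)\in\widehat{\mathcal{P}}^\ast U^{m\vert n}$ with $\lambda\neq 0$ I only need to exhibit one $A\in s\Psi\mathrm{DO}^l(U^{m\vert n})$ that is elliptic at $(x,k)$ and has $Au$ smooth. Picking any scalar elliptic $E\in\Psi\mathrm{DO}^l(U)$ and taking $A$ to be the diagonal super pseudodifferential operator with all diagonal components $A_I{}^I=E$ and vanishing off-diagonal entries does the job: $A$ is elliptic in the sense of Definition \ref{def:ellipticpseudo}, $Au$ is smooth because properly supported pseudodifferential operators preserve smoothness, and invertibility of $\sigma_l(A)(x,k)$ forces $\sigma_l(A)(x,k)(\lambda)\neq 0$. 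Hence $(x,k,\lambda)\notin s\mathrm{WF}(u)$, which combined with the preceding observation yields the claimed equality.

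For ``$\Leftarrow$'', the hypothesis forces $s\mathrm{WF}(u)\setminus\bigl((\TT^\ast U\setminus{\bf 0})\times\{0\}\bigr)=\emptyset$, so Proposition \ref{propo:WFproperties}(b) immediately gives $\bigcup_{I\in\bbZ_2^n}\mathrm{WF}(u_I)=\emptyset$. By the classical characterization of the ordinary wavefront set (Theorem 8.1.1 in Ref.~\onlinecite{Hormander}), each component $u_I\in\DD^\prime(U)$ is then smooth, and therefore so is $u=\sum_I u_I\,\theta^I\in C^\infty(U)\otimes\wedge^\bullet\bbR^n$. No real obstacle arises; the only mildly non-formal ingredient is the existence of a scalar elliptic pseudodifferential operator of any prescribed order $l$, which is standard (e.g.\ local parametrices of $(1-\Delta)^{l/2}$), and the rest is bookkeeping on top of Proposition \ref{propo:WFproperties}(b).
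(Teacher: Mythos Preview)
Your proof is correct and follows essentially the same approach as the paper, which simply invokes Proposition~\ref{propo:WFproperties}(b). You make explicit two points the paper leaves implicit: the trivial inclusion $(\TT^\ast U\setminus\mathbf{0})\times\{0\}\subseteq s\mathrm{WF}(u)$ coming from linearity of the symbol, and a direct diagonal-elliptic-operator argument for the forward direction rather than routing through the proposition; both are entirely in the spirit of the paper's one-line proof.
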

\begin{proof}
The statement is a special instance of \eqref{projWF}.
\end{proof}

\begin{ex}\label{ex:sWF}
Let us consider the superdomain $U^{m\vert 2}$ and the superdistribution
\begin{flalign}
u = v + v\,\theta^1\theta^2 = \begin{pmatrix}
v \\ 0 \\ v
\end{pmatrix}~,
\end{flalign} 
where $v\in \DD^\prime(\bbR^{m})$ is an ordinary distribution and
$0$ denotes the zero vector in $\wedge^1\bbR^2\simeq \bbR^2$ 
according to our block-matrix component notation.
Then the super pseudodifferential operator
\begin{flalign}
A = \begin{pmatrix}
0 & 0 & 0\\
0 & 0 & 0\\
-1 & 0 & 1
\end{pmatrix}
\end{flalign}
is of order $0$ and annihilates $u$. In particular, $Au=0$ is smooth.
The super principal symbol of order $0$ of $A$ reads as
\begin{flalign}
\sigma_0(A)(x,k) = \begin{pmatrix}
0 & 0 & 0\\
0 & 0 & 0\\
0 & 0 & 1
\end{pmatrix}~,
\end{flalign}
for any $(x,k) \in \TT^\ast U$. Hence all polarization vectors in the super wavefront set 
$s\mathrm{WF}(u)$ in Definition \eqref{sWF} have necessarily a vanishing third component 
(i.e.\ highest component in the $\theta$-expansion). 
Explicitly,
\begin{flalign}
s\mathrm{WF}(u) \subseteq \big(\TT^\ast U\setminus {\bf 0}\big) \times \big\{\lambda \in \wedge^\bullet \bbC^2 \,:\, \lambda_{(1,1)}=0 \big\}~.
\end{flalign}
Loosely speaking, this shows that our notion of super wavefront sets both picks 
out the leading singularities to determine the polarization and assigns a higher weight 
to the components of a superdistribution with a lower number of $\theta$-powers. 
Notice that this is a direct consequence of our definition of orders and super principal symbols for super pseudodifferential
operators in Definition \ref{defi:pseudosymbol}. Hence this feature generalizes to superdomains in
higher odd-dimensions $U^{m\vert n}$.
\end{ex}

The super wavefront set of a superdistribution behaves well with respect 
to the action of super pseudodifferential operators.
\begin{propo}\label{propo:sWFpseudoaction}
Let $u\in \DD^\prime(U)\otimes \wedge^\bullet \bbR^n $ and $A\in s\Psi\mathrm{DO}^l(U^{m\vert n})$.
Then
\begin{flalign}\label{eqn:sWFpseudoaction}
s\mathrm{WF}(Au) \supseteq \sigma_l(A)\big(s\mathrm{WF}(u)\big) := \left\{\big(x,k,\sigma_l(A)(x,k)\big(\lambda\big)\big) : (x,k,\lambda)\in s\mathrm{WF}(u) \right\}~,
\end{flalign}
where the equality holds true whenever $A$ is elliptic.
\end{propo}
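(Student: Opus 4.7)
The plan is to first establish the inclusion \eqref{eqn:sWFpseudoaction} by a short contradiction argument based on the multiplicativity of super principal symbols (Lemma \ref{lem:symbolproperties} a)), and then to upgrade it to an equality under the ellipticity hypothesis by invoking a parametrix from Lemma \ref{lem:inverseOP}.

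For the inclusion, I would pick $(x,k,\mu) = (x,k,\sigma_l(A)(x,k)(\lambda))$ with $(x,k,\lambda) \in s\mathrm{WF}(u)$ and assume toward a contradiction that $(x,k,\mu) \notin s\mathrm{WF}(Au)$. By Definition \ref{sWF} this produces some $B \in s\Psi\mathrm{DO}^{l^\prime}(U^{m\vert n})$ with $B(Au)$ smooth and $\sigma_{l^\prime}(B)(x,k)(\mu) \neq 0$. By Lemmas \ref{lem:pseudoproperties} a) and \ref{lem:symbolproperties} a), the composition $B\circ A$ lies in $s\Psi\mathrm{DO}^{l+l^\prime}(U^{m\vert n})$ and satisfies
\[
\sigma_{l+l^\prime}(B\circ A)(x,k)(\lambda) \,=\, \sigma_{l^\prime}(B)(x,k)(\mu) \,\neq\, 0~,
\]
while $(B\circ A)u = B(Au)$ is smooth. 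Together with the order-independence of the super wavefront set (Proposition \ref{propo:WFproperties} a)) this contradicts $(x,k,\lambda) \in s\mathrm{WF}(u)$.

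For the equality when $A$ is elliptic, I would invoke Lemma \ref{lem:inverseOP} to pick a parametrix $F \in s\Psi\mathrm{DO}^{-l}(U^{m\vert n})$ with $F\circ A = \id + R$, where $R \in s\Psi\mathrm{DO}^{-\infty}(U^{m\vert n})$. Then $u = F(Au) - Ru$ with $Ru$ smooth, and Lemma \ref{lem:symbolproperties} a) applied to $F\circ A = \id + R$ gives $\sigma_{-l}(F)(x,k) = \sigma_l(A)(x,k)^{-1}$ on $\TT^\ast U\setminus {\bf 0}$. For any $(x,k,\mu) \in s\mathrm{WF}(Au)$, setting $\lambda := \sigma_l(A)(x,k)^{-1}(\mu)$ yields $\mu = \sigma_l(A)(x,k)(\lambda)$, so it is enough to prove $(x,k,\lambda) \in s\mathrm{WF}(u)$. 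If not, there would exist $B \in s\Psi\mathrm{DO}^{l^\prime}(U^{m\vert n})$ with $Bu$ smooth and $\sigma_{l^\prime}(B)(x,k)(\lambda) \neq 0$. Then $(B\circ F)(Au) = Bu + BRu$ is smooth (the second summand by stability of $s\Psi\mathrm{DO}^{-\infty}$ under composition), while
\[
\sigma_{l^\prime-l}(B\circ F)(x,k)(\mu) \,=\, \sigma_{l^\prime}(B)(x,k)(\lambda) \,\neq\, 0~,
\]
contradicting $(x,k,\mu) \in s\mathrm{WF}(Au)$.

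The only point that requires care is the systematic use of Proposition \ref{propo:WFproperties} a) so that operators of different orders can be tested against the same super wavefront set; no analytic input beyond Lemmas \ref{lem:pseudoproperties}, \ref{lem:symbolproperties} and \ref{lem:inverseOP} is needed. I do not expect a genuine obstacle here, and the statement globalizes from a superdomain $U^{m\vert n}$ to an arbitrary supermanifold $X$ by the chart-wise patching already set up in Section \ref{sec:superpseudos}.
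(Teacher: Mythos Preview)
Your argument is correct and essentially the same as the paper's: for the inclusion you give the contrapositive of the paper's direct verification (both hinge on the multiplicativity $\sigma_{l+l'}(B\circ A)=\sigma_{l'}(B)\circ\sigma_l(A)$), and for the elliptic case you spell out explicitly what the paper summarizes as ``replacing the role of $u$ with $Au$ and that of $A$ with $F$.'' Your explicit invocation of Proposition~\ref{propo:WFproperties}~a) to reconcile orders is a good point of care that the paper leaves implicit.
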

\begin{proof}
Let  $(x,k,\lambda) \in s\mathrm{WF}(u)$ and 
$B\in  s\Psi \mathrm{DO}^{l'}(U^{m\vert n})$
be such that $BAu$ is smooth. 
By hypothesis, we have that $\sigma_{l+l^\prime} (B\circ A)(x,k)\big(\lambda\big)=0$,
and hence $\sigma_{l^\prime}(B)(x,k)\big( \sigma_l(A)(x,k)\big(\lambda\big)\big) =0$.
As $B$ was arbitrary (as long as $BAu$ is smooth),
this implies that $\big(x,k,\sigma_l(A)(x,k)\big(\lambda\big)\big) \in  s\mathrm{WF}(Au)$.
\sk

If $A$ is elliptic, we use Lemma \ref{lem:inverseOP} to obtain
an elliptic $F\in s\Psi\mathrm{DO}^{-l}(U^{m\vert n})$, 
such that both $A\circ F-\id$ and $F\circ A-\id$ 
lie in $s\Psi\mathrm{DO}^{-\infty}(U^{m\vert n})$. 
Equality in \eqref{eqn:sWFpseudoaction} 
is then shown by replacing the role of $u$ with $Au$ and that of $A$ with $F$.
\end{proof}
\begin{rem}
More generally, equality in \eqref{eqn:sWFpseudoaction} 
holds true microlocally above any point $(x,k)\in \TT^\ast U\setminus{\bf 0}$ 
where $\sigma_l(A)$ is invertible.
\end{rem}

Given any supermanifold isomorphism $\chi : U^{m\vert n}\to V^{m\vert n}$,
the fibre-wise polarization mapping from \eqref{eqn:polmapping}
defines a super vector bundle isomorphism
\begin{flalign}
\xymatrix{
\ar[d]_{\pi} \Pol  V^{m\vert n} \ar[rr]^-{\Pol \chi} &&  \Pol U^{m\vert n}\ar[d]^-{\pi}\\
\ar[d]_{\pi_{\TT^\ast}}\TT^\ast V \ar[rr]^-{\TT^\ast \widetilde{\chi}} && \TT^\ast U \ar[d]^{\pi_{\TT^\ast}}\\
V \ar[rr]_-{\widetilde{\chi}^{-1}} && U
}
\end{flalign}
We now show that the super wavefront sets transform well under supermanifold isomorphisms.
\begin{propo}\label{propo:trafosWF}
Let $\chi : U^{m\vert n}\to V^{m\vert n}$ be a supermanifold isomorphism
and $u \in \DD^\prime (V)\otimes \wedge^\bullet\bbR^n$ a superdistribution. 
Denote by $\chi^\ast_V(u)\in \DD^\prime (U)\otimes \wedge^\bullet\bbR^n $
the pullback of $u$ along $\chi$. Then
\begin{flalign}
s\mathrm{WF}(\chi^\ast_V(u)) = \Pol \chi \big(s\mathrm{WF}(u)\big)~.
\end{flalign}
\end{propo}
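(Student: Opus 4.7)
The proof is a functoriality argument built on Lemma~\ref{lem:symbolproperties}~b). Because $\chi$ is an isomorphism, the contravariance in \eqref{eqn:functoriality} implies that $\Pol\chi$ and $\Pol\chi^{-1}$ are fiberwise linear inverses of each other, so it suffices to establish the inclusion $s\mathrm{WF}(\chi^\ast_V(u)) \supseteq \Pol\chi(s\mathrm{WF}(u))$ and then dualize.

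For the forward inclusion, I would take $(y,k^\prime,\lambda^\prime)\in s\mathrm{WF}(u)$, set $(x,k,\lambda):=\Pol\chi(y,k^\prime,\lambda^\prime)$ (so that $\widetilde{\chi}(x)=y$), and aim to show $(x,k,\lambda)\in s\mathrm{WF}(\chi^\ast_V(u))$. Let $B\in s\Psi\mathrm{DO}^l(U^{m\vert n})$ be an arbitrary operator with $B\chi^\ast_V(u)$ smooth. By Lemma~\ref{lem:pseudoproperties}~b), the conjugate $A:={\chi^\ast_V}^{-1}\circ B\circ\chi^\ast_V$ lies in $s\Psi\mathrm{DO}^l(V^{m\vert n})$, and since $\chi^\ast_V$ and ${\chi^\ast_V}^{-1}=(\chi^{-1})^\ast_U$ preserve smoothness, the section $Au={\chi^\ast_V}^{-1}(B\chi^\ast_V(u))$ is smooth. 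Hence $\sigma_l(A)(y,k^\prime)(\lambda^\prime)=0$ by Definition~\ref{sWF}. Invoking Lemma~\ref{lem:symbolproperties}~b) to factorize $\sigma_l(A)=\Pol\chi^{-1}\circ\sigma_l(B)\circ\Pol\chi$, and using that $\Pol\chi^{-1}$ acts fiberwise as a linear isomorphism, forces $\sigma_l(B)(x,k)(\lambda)=0$. Since $B$ was arbitrary, $(x,k,\lambda)\in s\mathrm{WF}(\chi^\ast_V(u))$.

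For the reverse inclusion, I would apply the forward inclusion to the isomorphism $\chi^{-1}:V^{m\vert n}\to U^{m\vert n}$ and the superdistribution $\chi^\ast_V(u)\in\DD^\prime(U)\otimes\wedge^\bullet\bbR^n$, using that $(\chi^{-1})^\ast_U={\chi^\ast_V}^{-1}$. This produces $s\mathrm{WF}(u)\supseteq\Pol\chi^{-1}(s\mathrm{WF}(\chi^\ast_V(u)))$, and applying the fiberwise bijection $\Pol\chi$ to both sides yields $\Pol\chi(s\mathrm{WF}(u))\supseteq s\mathrm{WF}(\chi^\ast_V(u))$, as required.

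I do not expect a real obstacle: the content is packaged into Lemmas~\ref{lem:pseudoproperties}~b) and~\ref{lem:symbolproperties}~b). The only mildly delicate point is the verification that the superalgebra homomorphism $\chi^\ast_V$ extends continuously to superdistributions; this follows from its factorization $\chi^\ast_V=\widetilde{\chi}^\ast\circ D^\chi$ into the pullback along the underlying diffeomorphism $\widetilde{\chi}$ and a matrix of differential operators, both of which extend canonically to distributions.
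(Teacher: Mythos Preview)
Your proof is correct and is exactly the unpacking of the paper's one-line argument, which reads in its entirety: ``This is a direct consequence of Lemma~\ref{lem:symbolproperties}~b).'' Your conjugation $B\mapsto {\chi^\ast_V}^{-1}\circ B\circ\chi^\ast_V$ together with the symbol factorization from that lemma is precisely the mechanism the paper is invoking, and your dualization via $\chi^{-1}$ for the reverse inclusion is the natural way to complete it.
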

\begin{proof}
This is a direct consequence of Lemma \ref{lem:symbolproperties} b).
\end{proof}

This transformation property of the super wavefront set 
under the action of all supermanifold isomorphisms allows us to globalize
super wavefront sets from superdomains to supermanifolds:
Let $u$ be a superdistribution on a supermanifold $X = (\widetilde{X},\OO_X)$.
We use a superatlas $\rho_{\alpha} :X\vert_{U_{\alpha}}^{} \to {W_{\alpha}}^{m \vert n}$ 
and describe $u$ in terms of a family of local superdistributions 
$u_{\alpha}\in \DD^\prime(W_{\alpha})\otimes\wedge^\bullet\bbR^n$,
which satisfy the gluing conditions
\begin{flalign}
\res_{W_{\beta}, \widetilde{\rho_\beta}(U_{\alpha\beta})}\big(u_{\beta}\big) 
= {\chi_{\beta\alpha}}^\ast \big(\res_{W_{\alpha}, \widetilde{\rho_\alpha}(U_{\alpha\beta})} (u_{\alpha})\big)
\end{flalign}
on all overlaps $U_{\alpha\beta}$.
Here $\chi_{\beta\alpha}$ are the transition  supermanifold  morphisms.
The super wavefront set of $u$ 
is then obtained by gluing all subsets $s\mathrm{WF}(u_{\alpha}) \subseteq \Pol {W_{\alpha}}^{m\vert n}$
via the transition functions $g_{\alpha\beta} = \Pol \chi_{\beta\alpha}$ of the polarization bundle.
Proposition \ref{propo:trafosWF} guarantees that this construction 
defines a global super wavefront set $s\mathrm{WF}(u) \subseteq \Pol X$.


\section{\label{sec:pullback}Pullback and multiplication theorems}
Given a {\em generic} supermanifold morphism $\chi : X\to Y$,
 we {\em cannot} pull back a generic superdistribution $u$ on $Y$ to 
a superdistribution on $X$. However, depending on the explicit form of $\chi$,
certain superdistributions $u$ on $Y$ may admit a (unique) pullback to $X$. 
It is the goal of this section to develop a suitable criterion to select a class of
superdistributions which admit a pullback.
\sk

Before we start with supergeometric considerations, 
let us briefly recall the solution to the above problem in ordinary geometry, 
see e.g.\ Ref.~\onlinecite{Hormander}: Consider 
a smooth map $\widetilde{\chi} : U\to V$ between two open domains $U\subseteq \bbR^m$ and $V\subseteq \bbR^{m^\prime}$. 
The normal set of $\widetilde{\chi}$ is the subset of $\TT^\ast V$ given by
\begin{flalign}
N_{\widetilde{\chi}} := \Big\{ \big(\widetilde{\chi}(x),k^\prime\big) \in \TT^\ast V \, :\, x\in U\, ,~ \TT^\ast \widetilde{\chi} (k^\prime)=0  \Big\}~.
\end{flalign}
It was shown in Theorem 8.2.4 in Ref.~\onlinecite{Hormander} that the pullback map
$\widetilde{\chi}^\ast : C^\infty(V)\to C^\infty(U)$ admits a unique continuous extension 
to those distributions $u\in \DD^\prime(V)$ for 
which $\mathrm{WF}(u)\cap N_{\widetilde{\chi}}=\emptyset$ holds true.
\sk

Let us now consider a supermanifold morphism $\chi : U^{m\vert n}\to V^{m^\prime\vert n^\prime}$
between two superdomains. The case of a generic supermanifold morphism $\chi : X\to Y$ between two supermanifolds
follows from this by localizing $\chi$ in suitable superatlases of $X$ and $Y$.
Recalling that $\chi_V^\ast$ admits a unique factorization \eqref{eqn:pullbackcomponents}
into a matrix of differential operators $D^\chi$ and the component-wise pullback $\widetilde{\chi}^\ast$ along
the underlying smooth map, we analyze the pullback of superdistributions in two steps:
Given any superdistribution $u\in \DD^\prime (V)\otimes \wedge^\bullet \bbR^{n^\prime}$ on $V^{m^\prime\vert n^\prime}$,
the first step is to act with the differential operator $D^\chi$ on $u$, which 
is {\em always} well-defined and results in an auxiliary superdistribution
\begin{flalign}
D^\chi u \in \DD^\prime (V)\otimes \wedge^\bullet \bbR^{n}~,
\end{flalign}
where the components are now in the Grassmann algebra $\wedge^\bullet \bbR^{n}$ with $n$ generators.
In the second step, we would like to pull back $D^\chi u$ along $\widetilde{\chi}^\ast$. However, this operation is
{\em not always} well-defined. If we assume the condition
\begin{flalign}\label{prelim:WFpullbackcondition}
\pi\Big(s\mathrm{WF}(D^\chi u)\setminus \big( (\TT^\ast V\setminus {\bf 0})\times \{0\}\big)  \Big) \cap N_{\widetilde{\chi}}
=\emptyset\ ,
\end{flalign}
then $\chi_V^\ast u := \widetilde{\chi}^\ast D^\chi u \in \DD^\prime (U)\otimes \wedge^\bullet \bbR^{n}$
exists on account of the ordinary pullback theorem -- see Theorem 8.2.4 in Ref.~\onlinecite{Hormander}.
In fact, using Proposition \ref{propo:WFproperties}, the condition
\eqref{prelim:WFpullbackcondition} is equivalent to 
\begin{flalign}
\Big(\bigcup_{I\in\bbZ_2^{n}}\mathrm{WF}\big((D^\chi u)_I \big)\Big)\cap N_{\widetilde{\chi}}=\emptyset~.
\end{flalign}
By the ordinary pullback theorem this implies that all components $(D^\chi u)_I$ may be safely pulled back
along $\widetilde{\chi}^\ast$, and hence also $D^\chi u$.
Summing up, we have shown the following version of a pullback theorem for superdistributions.
\begin{theo}\label{theo:pullback}
Let $\chi : U^{m\vert n}\to V^{m^\prime\vert n^\prime}$ be a supermanifold morphism between two superdomains,
and consider the unique factorization $\chi_V^\ast = \widetilde{\chi}^\ast\circ D^\chi$ given in \eqref{eqn:pullbackcomponents}. 
Then the pullback map 
\begin{equation}
\chi_V^\ast : C^\infty(V)\otimes \wedge^\bullet\bbR^{n^\prime}\longrightarrow C^\infty(U)\otimes \wedge^\bullet \bbR^n
\end{equation}
has a unique continuous extension to those superdistributions $u\in \DD^\prime(V)\otimes \wedge^\bullet\bbR^{n^\prime}$
which satisfy the condition \eqref{prelim:WFpullbackcondition}.
\end{theo}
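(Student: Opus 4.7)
The plan is to formalize the two-step decomposition that is already sketched just before the statement, using the canonical factorization $\chi_V^\ast = \widetilde{\chi}^\ast\circ D^\chi$ established in \eqref{eqn:pullbackcomponents}. The first step is entirely algebraic and always defined; the second step reduces to the classical Hörmander pullback theorem applied componentwise.

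First I would observe that, since $D^\chi$ is a matrix of (ordinary) differential operators on $V$ with smooth coefficients, the map $D^\chi : C^\infty(V)\otimes \wedge^\bullet \bbR^{n^\prime} \to C^\infty(V)\otimes \wedge^\bullet \bbR^{n}$ automatically extends by duality to a continuous linear map $D^\chi : \DD^\prime(V)\otimes \wedge^\bullet \bbR^{n^\prime} \to \DD^\prime(V)\otimes \wedge^\bullet \bbR^{n}$, with no hypothesis on $u$. The whole content of the theorem is therefore shifted onto the continuous extension of $\widetilde{\chi}^\ast$ applied componentwise to $D^\chi u$.

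Next I would translate the super wavefront condition \eqref{prelim:WFpullbackcondition} into a statement about ordinary wavefront sets of the components. By item b) of Proposition \ref{propo:WFproperties},
\begin{flalign*}
\pi\Big(s\mathrm{WF}(D^\chi u)\setminus \big((\TT^\ast V\setminus {\bf 0})\times\{0\}\big)\Big) \;=\; \bigcup_{I\in\bbZ_2^{n}} \mathrm{WF}\big((D^\chi u)_I\big),
\end{flalign*}
so the hypothesis is equivalent to $\mathrm{WF}((D^\chi u)_I)\cap N_{\widetilde{\chi}}=\emptyset$ for every multi-index $I\in\bbZ_2^{n}$. Hörmander's Theorem 8.2.4 (Ref.~\onlinecite{Hormander}) then provides a unique continuous extension of $\widetilde{\chi}^\ast$ to each component $(D^\chi u)_I \in \DD^\prime(V)$, continuous in the appropriate Hörmander topology $\DD^\prime_{\Gamma}$ on distributions with wavefront set in a closed cone $\Gamma$ disjoint from $N_{\widetilde{\chi}}$. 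Assembling these componentwise pullbacks gives an element $\widetilde{\chi}^\ast (D^\chi u)\in \DD^\prime(U)\otimes\wedge^\bullet\bbR^{n}$, which I would declare to be $\chi_V^\ast u$.

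It remains to verify uniqueness and continuity of the composed extension. Uniqueness follows because any continuous extension of $\chi_V^\ast$ restricts, via the Grassmann expansion and the fixed factorization $\chi_V^\ast=\widetilde{\chi}^\ast\circ D^\chi$, to a continuous extension of the classical pullbacks of the components $(D^\chi u)_I$, which are unique by Hörmander's result; thus any two supergeometric extensions must agree. Continuity is obtained by composing the continuous (in fact, sequentially continuous on $\DD^\prime_{\Gamma}$) map $D^\chi$ with the continuous componentwise pullback $\widetilde{\chi}^\ast$. I expect the mildest technical point to be bookkeeping: one has to verify that the wavefront hypothesis on $D^\chi u$ is preserved under the continuity estimates required by Hörmander's theorem, i.e.\ that the relevant closed conic set $\Gamma\supseteq \bigcup_I \mathrm{WF}((D^\chi u)_I)$ can be chosen disjoint from $N_{\widetilde{\chi}}$; this is straightforward since both sides are closed conic subsets of $\TT^\ast V\setminus{\bf 0}$ and the hypothesis is already phrased in terms of their disjointness.
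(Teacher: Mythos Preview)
Your proposal is correct and follows essentially the same approach as the paper: the paper's proof is precisely the two-step argument given in the paragraph preceding the theorem (apply $D^\chi$, then invoke Proposition~\ref{propo:WFproperties}~b) to reduce \eqref{prelim:WFpullbackcondition} to a componentwise Hörmander condition and apply Theorem~8.2.4 of Ref.~\onlinecite{Hormander}). Your write-up is in fact more detailed than the paper's, since you spell out the uniqueness argument and the role of the $\DD^\prime_\Gamma$ topology, which the paper leaves implicit.
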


\begin{rem}
Another condition which would guarantee the existence of $\chi_V^\ast u$ is given by
\begin{flalign}\label{prelim:WFpullbackcondition2}
\pi\Big(s\mathrm{WF}(u)\setminus \big( (\TT^\ast V\setminus {\bf 0})\times \{0\}\big)  \Big) \cap N_{\widetilde{\chi}}
=\emptyset~.
\end{flalign}
In fact, using Proposition \ref{propo:WFproperties}, the condition
\eqref{prelim:WFpullbackcondition2} is equivalent to the strong condition 
$\mathrm{WF}(u_J) \cap N_{\widetilde{\chi}}=\emptyset$ for all
components $u_J$. 
Because differential operators preserve wavefront sets, it follows that 
\begin{multline}
\mathrm{WF}\big((D^\chi u)_I\big)\cap N_{\widetilde{\chi}} = \mathrm{WF}\Big(\sum_{J\in \bbZ^{n^\prime}_2} {(D^\chi)_I}^J u_J\Big)\cap N_{\widetilde{\chi}}\\
\subseteq \bigcup_{J\in \bbZ^{n^\prime}_2}\mathrm{WF}\big({(D^\chi)_I}^J u_J\big) \cap N_{\widetilde{\chi}}
\subseteq \bigcup_{J\in \bbZ^{n^\prime}_2}\mathrm{WF}(u_J) \cap N_{\widetilde{\chi}} =\emptyset
\end{multline}
for any $I$, which implies \eqref{prelim:WFpullbackcondition}.
Notice that the condition \eqref{prelim:WFpullbackcondition2} is 
much coarser than our condition \eqref{prelim:WFpullbackcondition}. Loosely speaking,
it does not take into account those components of $u$ which ``vanish algebraically under pullback'' 
due to the differential operator $D^\chi$. Let us illustrate this important point by an example:
Consider the  supermanifold morphism $\chi : \{\ast\} \to U^{m\vert n}$ which maps
a point into the superdomain $U^{m\vert n}$. Then 
\begin{flalign}
\chi_U^\ast :  C^\infty(U)\otimes \wedge^\bullet \bbR^n\longrightarrow \bbR~,~~f=\sum_{I\in \bbZ^{n}_2} f_I\,\theta^I\longmapsto f_{(0,\dots,0)}(\widetilde{\chi}(\ast))
\end{flalign}
is the mapping which ``forgets'' all higher components in the Grassmann algebra and evaluates the lowest component
at the point $\widetilde{\chi}(\ast)\in U$. 
We can clearly extend $\chi_U^\ast$ to {\em all} superdistributions $\DD^\prime(U)\otimes \wedge^\bullet \bbR^n$ 
with smooth lowest component $u_{(0,\dots,0)}\in C^\infty(U)$
by setting 
\begin{flalign}
u=\sum_{I\in\bbZ_2^n} u_I\,\theta^I\longmapsto u_{(0,\dots,0)}(\widetilde{\chi}(\ast))~.
\end{flalign}
Because $N_{\widetilde{\chi}} = \TT^\ast_{\widetilde{\chi}(\ast)} U$ is the cotangent space
at $\widetilde{\chi}(\ast)$, the condition \eqref{prelim:WFpullbackcondition2} is violated as soon as any
$u_I$ has a singularity at this point. In contrast, our condition \eqref{prelim:WFpullbackcondition} 
just involves the lowest component $u_{(0,\dots,0)}$ of the superdistribution, because
the matrix of differential operators reads as $D^\chi = \begin{pmatrix} 1 & 0 &\cdots & 0\end{pmatrix}$
and hence $D^\chi u = u_{(0,\dots,0)}$.
\end{rem}

In the remaining part of this section we specialize the result of Theorem \ref{theo:pullback}
to the important case where $\chi$ is the super diagonal mapping 
\begin{flalign}
\Delta : U^{m\vert n} \longrightarrow U^{m\vert n} \times U^{m\vert n}\simeq (U\times U)^{2m\vert 2n}~.
\end{flalign}
The underlying smooth map $\widetilde{\Delta} : U\to U\times U\,,~x\mapsto (x,x)$
is the diagonal map and $\Delta_{U\times U}^\ast : C^\infty(U\times U)\otimes 
\wedge^\bullet \bbR^{n}\otimes \wedge^\bullet\bbR^n 
\to C^\infty(U) \otimes \wedge^\bullet \bbR^n$ factorizes as
\begin{flalign}\label{eqn:factorizationproduct}
\Delta_{U\times U}^\ast = \widetilde{\Delta}^\ast \circ D^{\Delta}
=(\widetilde{\Delta}^\ast \otimes \id_{\wedge^\bullet\bbR^n})\circ (\id_{C^\infty(U\times U)} \otimes\mu)~,
\end{flalign}
where $\mu: \wedge^\bullet \bbR^{n} \otimes \wedge^\bullet \bbR^{n}\to \wedge^\bullet \bbR^{n}$ 
denotes the product in the Grassmann algebra $\wedge^\bullet \bbR^{n}$.
The normal set of $\widetilde{\Delta}$ can be characterized explicitly and it is given by
\begin{flalign}\label{eqn:normalproduct}
N_{\widetilde{\Delta}} = \Big\{\big((x,x),(k,-k)\big)\in \TT^\ast (U\times U)\,:\, (x,k)\in \TT^\ast U\Big\}~.
\end{flalign}
Given two superdistributions $u,v\in \DD^\prime(U)\otimes \wedge^\bullet \bbR^n$, their product (if it exists)
is given by $u\,v := \Delta_{U\times U}^\ast(u\otimes v)$. Expanding into components $u = \sum_{I\in\bbZ_2^n} u_I \, \theta^I$
and $v\in \sum_{J\in\bbZ_2^n} u_J \, \theta^J$, we obtain 
\begin{flalign}
u\otimes v = 
\sum_{I,J\in\bbZ_2^n} u_I\otimes v_J ~ (\theta^I\otimes \theta^J) \in 
\DD^\prime(U\times U)\otimes \wedge^\bullet \bbR^{n}\otimes \wedge^\bullet\bbR^n ~.
\end{flalign}
Due to the factorization \eqref{eqn:factorizationproduct}, the product  
of $u$ and $v$ (if it exists) is computed by first multiplying in the Grassmann algebra
\begin{flalign}
 D^{\Delta}\big(u\otimes v\big) = \sum_{I,J\in\bbZ_2^n} u_I\otimes v_J ~ (\theta^I\, \theta^J) 
\end{flalign}
and then pulling back the result component-wise via $\widetilde{\Delta}^\ast$, i.e.\
\begin{flalign}
u\,v:= \Delta_{U\times U}^\ast\big(u\otimes v\big) = \sum_{I,J\in\bbZ_2^n} \widetilde{\Delta}^\ast(u_I\otimes v_J) 
~ (\theta^I\, \theta^J) ~.
\end{flalign}
As a consequence of Theorem \ref{theo:pullback}, we have
\begin{cor}
 The product $u\,v\in \DD^\prime(U)\otimes \wedge^\bullet \bbR^n$ exists whenever 
\begin{flalign}\label{prelim:WFpullbackconditionMULT}
\pi\Big(s\mathrm{WF}\big(D^{\Delta}(u\otimes v) \big)\setminus \big( (\TT^\ast (U\times U)\setminus {\bf 0})\times \{0\}\big)  \Big)\cap N_{\widetilde{\Delta}}=\emptyset~,
\end{flalign}
or equivalently, whenever all components
$u_I,v_J\in\DD^\prime(U)$, for which $\theta^I\,\theta^J\neq 0$, can be multiplied in the sense
of ordinary distributions, cf.\ Theorem 8.2.4 in Ref.~\onlinecite{Hormander}.
\end{cor}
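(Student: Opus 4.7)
The plan is to derive this corollary by direct specialization of Theorem \ref{theo:pullback} to the super diagonal morphism $\Delta : U^{m\vert n}\to (U\times U)^{2m\vert 2n}$ and then to unpack the super wavefront condition into a condition on the ordinary wavefront sets of the components $u_I, v_J$ via Proposition \ref{propo:WFproperties} b).

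First, I would simply observe that the existence statement is immediate: Theorem \ref{theo:pullback} says that $\Delta_{U\times U}^\ast$ extends continuously to any superdistribution on $U^{m\vert n}\times U^{m\vert n}$ whose image under $D^{\Delta}$ satisfies the wavefront condition \eqref{prelim:WFpullbackcondition} with $\widetilde{\chi}=\widetilde{\Delta}$. Applied to $u\otimes v$, this gives precisely the hypothesis \eqref{prelim:WFpullbackconditionMULT}, and the definition $u\,v := \Delta_{U\times U}^\ast(u\otimes v)$ makes the product well-defined.

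The second step is the equivalence with Hörmander's pairwise criterion. Using the factorization \eqref{eqn:factorizationproduct}, the operator $D^{\Delta}=\id\otimes\mu$ acts purely in the Grassmann fiber, so its components are scalar differential operators of order zero (they just reshuffle $u_I\otimes v_J$ according to whether $\theta^I\theta^J$ is nonzero). Consequently
\begin{flalign*}
\big(D^{\Delta}(u\otimes v)\big)_K = \sum_{\theta^I\theta^J = \pm\theta^K} \pm\, u_I \otimes v_J~.
\end{flalign*}
Applying Proposition \ref{propo:WFproperties} b) to $D^{\Delta}(u\otimes v)$, the condition \eqref{prelim:WFpullbackconditionMULT} becomes
\begin{flalign*}
\Big(\bigcup_{K\in\bbZ_2^n}\mathrm{WF}\big((D^{\Delta}(u\otimes v))_K\big)\Big)\cap N_{\widetilde{\Delta}} = \emptyset~.
\end{flalign*}
Because ordinary wavefront sets are subadditive under sums and invariant/stable under the zero-order operations appearing in $D^\Delta$, this is equivalent to $\mathrm{WF}(u_I\otimes v_J)\cap N_{\widetilde{\Delta}}=\emptyset$ for every pair $(I,J)$ with $\theta^I\theta^J\neq 0$. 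By the standard formula for the wavefront set of a tensor product and the explicit form \eqref{eqn:normalproduct} of $N_{\widetilde{\Delta}}$, this last condition is exactly Hörmander's criterion (Theorem 8.2.10 in Ref.~\onlinecite{Hormander}) for the existence of the product $u_I\cdot v_J\in\DD^\prime(U)$.

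The main subtlety, and the only place requiring care, is the direction in which we need both implications between the super wavefront condition and the pairwise Hörmander conditions. The $\supseteq$ direction (from pairwise multiplicability to the super wavefront condition) uses subadditivity of $\mathrm{WF}$ under linear combinations; the $\subseteq$ direction uses that individual summands $u_I\otimes v_J$ appearing with $\theta^I\theta^J\neq 0$ are not cancelled by the remaining sum, so their wavefront sets must individually avoid $N_{\widetilde\Delta}$. Here one has to verify that no cancellations between different pairs $(I,J)$ occur in a fixed component $K$ that could spuriously shrink $\mathrm{WF}((D^{\Delta}(u\otimes v))_K)$; this follows from the linear independence of the relevant $u_I\otimes v_J$ (their singular supports sit in different Grassmann indices prior to multiplication, so the Grassmann multiplication merely relabels them without interference). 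Once this combinatorial point is settled, the two formulations of the hypothesis are equivalent and the corollary follows.
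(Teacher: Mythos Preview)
Your first two paragraphs are correct and constitute the entire argument the paper gives: the corollary is stated as an immediate consequence of Theorem~\ref{theo:pullback}, with no separate proof. Specializing the pullback theorem to $\chi=\Delta$ yields existence of $u\,v$ under \eqref{prelim:WFpullbackconditionMULT}, and Proposition~\ref{propo:WFproperties}~b) rewrites that condition as $\bigcup_K \mathrm{WF}\big((D^\Delta(u\otimes v))_K\big)\cap N_{\widetilde\Delta}=\emptyset$.

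The problem is your final paragraph, where you try to upgrade this to a strict biconditional with the pairwise H\"ormander condition. The implication ``pairwise $\Rightarrow$ \eqref{prelim:WFpullbackconditionMULT}'' is fine (subadditivity of $\mathrm{WF}$), but the converse is simply false, and your ``linear independence / no interference'' justification does not hold. After applying $\mu$, several tensors $u_I\otimes v_J$ with $\theta^I\theta^J=\pm\theta^K$ land in the \emph{same} component $K$ and can cancel. Concretely, on $U^{m\vert 2}$ take $u=v=\delta\,\theta^1+\delta\,\theta^2$ for a distribution $\delta$ whose square does not exist in H\"ormander's sense. Then every component of $D^\Delta(u\otimes v)$ vanishes identically (the $\theta^1\theta^2$-component is $\delta\otimes\delta-\delta\otimes\delta=0$), so \eqref{prelim:WFpullbackconditionMULT} holds trivially and $u\,v=0$ exists, yet the pairwise condition fails for $(I,J)=((1,0),(0,1))$.

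So the two hypotheses are not logically equivalent; the pairwise condition is strictly stronger. The paper's ``or equivalently'' is best read as offering a more explicit (and slightly coarser) sufficient criterion, not as asserting a biconditional, and indeed the paper does not attempt to prove one. You should drop the claimed reverse implication and instead note that the pairwise H\"ormander condition implies \eqref{prelim:WFpullbackconditionMULT} by subadditivity, hence is also sufficient for the product to exist.
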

\begin{rem}
It is important to stress that the condition \eqref{prelim:WFpullbackconditionMULT}
in the corollary above {\em does not} impose conditions on the components $u_I$ and $v_J$
which multiply trivially on account of the Grassmann algebra structure, i.e.\ for which $\theta^I\,\theta^J=0$.
This is a clear advantage compared to the alternative (and much coarser) condition \eqref{prelim:WFpullbackcondition2}.
\end{rem}


\section{\label{sec:singularities}Singularities in supergeometric field theory}
In this section we apply the techniques developed in this paper
to analyze the singularities of the supergeometric
field theory introduced in Example \ref{ex:hyperbolic}. For simplifying 
our explicit computations, we consider only the case where $M=\bbR^3$ is the Minkowski spacetime,
i.e.\ we take the flat Lorentzian metric $g = \mathrm{diag}(1,-1,-1)$ on $M$.
In this case the equation of motion operator \eqref{eqn:EOM3D}
has constant coefficients and reads as
\begin{flalign}\label{eqn:EOM3DMink}
P = \begin{pmatrix}
m & 0 & -1\\
0 & \ii\gamma^\mu\partial_\mu +m & 0\\
g^{\mu\nu}\partial_\mu\partial_\nu & 0 & m
\end{pmatrix}~.
\end{flalign}
Let $u\in \DD^\prime(\bbR^3)\otimes \wedge^\bullet\bbR^2$
be any superdistribution satisfying $P u =0$. By Proposition \ref{propo:sWFpseudoaction},
the super wavefront set $s\mathrm{WF}(u)\subseteq \widehat{\mathcal{P}}^\ast \bbR^{3\vert 2}$   
of $u$ satisfies the equality
\begin{flalign}\label{tmp:wfcond}
\sigma_1(P)\big(s\mathrm{WF}(u)\big) = (\TT^\ast\bbR^3 \setminus{\bf 0}) \times\{0\}~,
\end{flalign}
where we also have used that $(\TT^\ast\bbR^3 \setminus{\bf 0}) \times\{0\}$ is the smallest possible
super wavefront set, cf.\ Corollary \ref{cor:minsWF}.
The equality \eqref{tmp:wfcond} is equivalent to the inclusion
\begin{flalign}\label{eqn:NcalP}
s\mathrm{WF}(u) \subseteq \mathcal{N}_P := 
\Big\{(x,k,\lambda) \in \widehat{\mathcal{P}}^\ast \bbR^{3\vert 2} \, :\, \sigma_1(P)(x,k)\big(\lambda\big)=0\Big\}~,
\end{flalign}
which follows by direct inspection of the left-hand-side of \eqref{eqn:NcalP} and using
\eqref{eqn:sWFpseudoaction}. Using the explicit form of the super principal symbol of \eqref{eqn:EOM3DMink},
we find the inclusion
\begin{flalign}\label{eqn:NcalPmore}
s\mathrm{WF}(u) \subseteq \Big((\TT^\ast\bbR^3 \setminus{\bf 0}) \times\{0\}\Big) 
\cup  \Big\{ \big(x,k,\phi + \psi\,\theta\big)\,:\, g^{\mu\nu}k_{\mu}k_{\nu}=0\, ,~\gamma^\mu k_\mu \psi =0 \Big\} ~,
\end{flalign}
where we have used the compact notation $\psi \, \theta := \psi_a\,\theta^a := \psi_1 \,\theta^1 + \psi_2\,\theta^2 $.
In words, \eqref{eqn:NcalPmore} tells us that all elements $(x,k,\lambda)\in s\mathrm{WF}(u)$
with nontrivial $\lambda\neq 0$ are such that $k$ is light-like.
Moreover, $\lambda = \phi + \psi\,\theta$ does not contain a quadratic $\theta$-term and the Fermionic
polarizations $\psi$ have to satisfy the Dirac-polarization constraint $\gamma^\mu k_\mu \psi =0$.
\sk

We next observe that the composition $\widetilde{P}\circ P$ of \eqref{eqn:EOM3DMink}
with the super (pseudo-)differential operator (of order $1$)
\begin{flalign}\label{eqn:EOM3Dtilde}
\widetilde{P}= \begin{pmatrix}
m & 0 & 1\\
0 & -\ii \gamma^\mu\partial_\mu +m & 0\\
-g^{\mu\nu}\partial_\mu\partial_\nu & 0 & m
\end{pmatrix}~
\end{flalign}
gives the component-wise Klein-Gordon equation
\begin{flalign}
\widetilde{P}\circ P = (g^{\mu\nu}\partial_\mu\partial_\nu + m^2)\,\id=: Q\,\id~.
\end{flalign}
In particular, each component $u_I$ of any $u$ satisfying $Pu=0$ satisfies the Klein-Gordon equation
$Q u_I =0 $, which entails the following inclusion
\begin{flalign}
\mathrm{WF}(u_I) \subseteq \Omega_Q := \Big\{(x,k)\in \TT^\ast \bbR^3\setminus {\bf 0} \, :\, g^{\mu\nu}k_\mu\,k_\nu =0\Big\}~,
\end{flalign}
for all component wavefront sets. By the standard propagation of singularities theorem (see Chapter 
26 in Ref.~\onlinecite{HormanderIV}), 
this implies that all $\mathrm{WF}(u_I)$ are invariant under the flow of the Hamiltonian vector field
\begin{flalign}
H_Q := \big\{\sigma_{2}(Q),\,\cdot\, \big\} =  2 g^{\mu\nu}k_\mu\,\partial_\nu : C^\infty(\Omega_Q)\longrightarrow C^\infty(\Omega_Q) ~,
\end{flalign}
i.e.\ any integral curve $c : \bbR \to \Omega_{Q}$ of $H_Q$ which satisfies $c(0)\in\mathrm{WF}(u_I) $ remains in
$\mathrm{WF}(u_I)$. In our example, any integral curve of $H_Q$ is of the form
\begin{flalign}\label{eqn:curveexplicit}
c : \bbR\longrightarrow \Omega_Q ~,~~s\longmapsto  \big(x^\mu + s\, 2 g^{\mu\nu}k_{\nu} , k_{\nu}\big)~,
\end{flalign}
for some $(x^\mu,k_\nu)\in \Omega_Q$. 
\sk

Following the ideas of Dencker \cite{Dencker},
we now shall study the propagation of polarizations in our example.
Given any integral curve $c : \bbR\to\Omega_Q$ of $H_Q$ as in \eqref{eqn:curveexplicit},
we consider the restriction of $\mathcal{N}_P$ given in \eqref{eqn:NcalP} to $c$, which gives
rise to a vector bundle
\begin{flalign}
\mathcal{N}_P\vert_{c }\longrightarrow \bbR~.
\end{flalign}
Using \eqref{eqn:NcalPmore}, we can compute its total space
\begin{flalign}\label{eqn:Nvertgamma}
\mathcal{N}_P\vert_{c} = \big\{\big(s ,\phi + \psi\,\theta\big)\,:\, \gamma^\mu k_\mu \psi =0\big\}~.
\end{flalign}
As the solution space of the Dirac-constraint $\gamma^\mu k_\mu \psi$ is one-dimensional (in $3$ dimensions),
the vector bundle $\mathcal{N}_P\vert_{c }\to \bbR$ is of rank two.
Following Definition 4.1 in Ref.~\onlinecite{Dencker}, a {\em Hamiltonian orbit} 
for our operator $P$ is a sub-line bundle $L\subseteq \mathcal{N}_P\vert_{c}$, where
$c$ is an integral curve as above and $L$ is spanned by a section $w \in \Gamma^\infty (\mathcal{N}_P\vert_{c})$
that satisfies $D_P^{} w =0$. Here $D_P^{}:= H_Q + \frac{1}{2}\{\sigma_{1}(\widetilde{P}),\sigma_1(P)\} 
+ \ii\,\sigma_{1}(\widetilde{P})\,\sigma_{0}^s(P)$
is a partial connection (cf.\ Equation (4.6) in Ref.~\onlinecite{Dencker}), where $\sigma_{0}^s(P)$ denotes the subprincipal
symbol of $P$. Clearly, the vector bundle $\mathcal{N}_P\vert_{c }$ can be spanned by the sections
$w \in \Gamma^\infty (\mathcal{N}_P\vert_{c})$ satisfying $D_P^{} w=0$.
In our example, we find that
\begin{flalign}
D_P^{} = \frac{\partial }{\partial s} + \ii\,m\, \begin{pmatrix}
0 & 0 & 1\\
0 & \gamma^\mu\,k_\mu & 0\\
0 & 0 & 0 
\end{pmatrix} : \Gamma^\infty (\mathcal{N}_P\vert_{c})\longrightarrow \Gamma^\infty (\mathcal{N}_P\vert_{c})~.
\end{flalign}
Notice that the connection coefficients (i.e.\ the second term in the expression above)
act trivially on the fibers of $\mathcal{N}_P\vert_{c}$ (this follows from \eqref{eqn:Nvertgamma}),
hence the expression for $D_P^{}$ simplifies to
\begin{flalign}
D_P^{} =\frac{\partial }{\partial s} :   \Gamma^\infty (\mathcal{N}_P\vert_{c})\longrightarrow \Gamma^\infty (\mathcal{N}_P\vert_{c})~.
\end{flalign}
Any Hamiltonian orbit in our example is therefore of the form
\begin{flalign}
\bbR\times \mathrm{span}_{\bbC}\big( \phi + \psi\,\theta\big)\subseteq \mathcal{N}_P\vert_{c}~,
\end{flalign}
for some $0\neq \phi + \psi\,\theta\in\wedge^\bullet\bbR^2$ satisfying $\gamma^\mu k_\mu \psi =0$.
\sk

Finally, we notice that $s\mathrm{WF}(u)$, for any $u$ satisfying $Pu=0$, 
is the union of such Hamiltonian orbits, i.e.\ the propagation of polarization result in Theorem 4.2 in Ref.~\onlinecite{Dencker}
remains valid in our supergeometric example. This follows from the fact that $Pu=0$
is equivalent to the component equations, for $u = \phi + \psi\,\theta + 
F\,\theta^1\,\theta^2\in\DD^\prime(\bbR^3)\otimes \wedge^\bullet\bbR^2$,
\begin{flalign}
m\,\phi = F\quad ,\qquad \ii\gamma^\mu\partial_\mu \psi +m\,\psi =0\quad ,\qquad  
g^{\mu\nu}\partial_\mu\partial_\nu \phi + m \,F=0~,
\end{flalign}
which can be decoupled into the Dirac equation $\ii\gamma^\mu\partial_\mu \psi +m\,\psi =0$
and the massive Klein-Gordon equation $g^{\mu\nu}\partial_\mu\partial_\nu \phi + m^2 \,\phi=0$. 
The absence of $F$-polarizations in the super wavefront set $s\mathrm{WF}(u)$ for
$F$ satisfying $m\,\phi = F$ follows from the discussion in Example \ref{ex:sWF}.
\sk

Combining Theorem \ref{theo:pullback} with this knowledge about propagation of singularities, 
it follows that any distributional solution to our supersymmetric field equation can be restricted to a 
Cauchy surface. The initial conditions for a well-posed supergeometric Cauchy problem, 
however, need to account for compatibility conditions between the Cauchy data in different degrees, 
see Refs.~\onlinecite{GKL,Grubb}. A detailed discussion will be given in ongoing work.


\section*{Acknowledgments}
We would like to thank Helmut Abels, Federico Bambozzi, Ulrich Bunke and Felix Finster for useful discussions.
C.D.\ is supported by the University of Pavia and would like to thank the Erwin Schr\"odinger Institute 
for the kind hospitality during the programme ``Modern theory of wave equations'', where part of this work has been done.
H.G.\ is supported by the ERC Advanced Grant HARG 268105.
S.M.\ is supported by the DFG Research Training Group GRK 1692 ``Curvature, Cycles, and Cohomology''
and by the COST Action MP1405 QSPACE through a short term scientific mission (STSM) to Edinburgh.
A.S.\ is supported by a Research Fellowship of the Deutsche Forschungsgemeinschaft (DFG, Germany)
and also would like to thank Felix Finster for the invitation to
Regensburg, where part of this research has been performed.



\begin{thebibliography}{10}

 \bibitem{Carmeli}
  C.~Carmeli, L.~Caston and R.~Fioresi,
 ``Mathematical Foundations of Supersymmetry,''
  EMS Series of Lectures in Mathematics  (2011).
  
 \bibitem{SuperBook}
  P.~Deligne and J.~W.~Morgan,
  ``Notes on Supersymmetry (following Joseph Bernstein),''
 in: P.~Deligne et.\ al.\ (ed.) {\em Quantum fields and strings: a course for mathematicians. Vol. 1}, 
 American Mathematical Society, Providence, RI, Institute for Advanced Study (IAS), Princeton, NJ (1999).
  
\bibitem{Dencker}
N.~Dencker, 
``On the propagation of singularities of systems of real principal type,'' 
J.\ Funct.\ Anal.\ {\bf 46}, no.\ 3, 351--372 (1982).


\bibitem{Franco}
  D.~H.~T.~Franco and C.~M.~M.~Polito,
  ``Supersymmetric field theoretic models on a supermanifold,''
  J.\ Math.\ Phys.\  {\bf 45}, 1447 (2004) [hep-th/0212038].

\bibitem{Fredenhagen}
  K.~Fredenhagen and K.~Rejzner,
  ``Perturbative Construction of Models of Algebraic Quantum Field Theory,'' 
  in R.~Brunetti et.\ al.\ (ed.): {\em Advances in Algebraic Quantum Field Theory}, Springer International Publishing,  
  31--74 (2015) [arXiv:1503.07814 [math-ph]].
  
\bibitem{GKL}
L.~G\aa rding, T.~Kotake and J.~Leray, 
``Uniformisation et d\'eveloppement asymptotique
de la solution du probl\`eme de Cauchy lin\'eaire \`a don\'ees holomorphes,''
Bull.\ Soc.\ Math.\ France {\bf 92},  263--361 (1964).  
  
  \bibitem{Grisaru}
  M.~T.~Grisaru, W.~Siegel and M.~Rocek,
 ``Improved Methods for Supergraphs,''
  Nucl.\ Phys.\ B {\bf 159}, 429 (1979).
  
\bibitem{Grubb}  
G.~Grubb, 
``Boundary problems for systems of partial differential operators of mixed order,'' 
J.\ Funct.\ Anal.\ {\bf 26}, 131--165 (1977).
  
\bibitem{HHS}
  T.-P.~Hack, F.~Hanisch and A.~Schenkel,
  ``Supergeometry in locally covariant quantum field theory,''
 Commun.\ Math.\ Phys.\  {\bf 342}, no. 2, 615 (2016)
  [arXiv:1501.01520 [math-ph]].
  
\bibitem{Hormander}
L.~H\"ormander,
``The analysis of linear partial differential operators I,'' 
Classics in Mathematics, Springer-Verlag, Berlin (2003).

\bibitem{HormanderIV}
L.~H\"ormander,
``The analysis of linear partial differential operators IV,'' 
Classics in Mathematics, Springer-Verlag, Berlin (2009).

\bibitem{Radzikowski} 
  M.~J.~Radzikowski,
  ``Micro-local approach to the Hadamard condition in quantum field theory on curved space-time,''
  Commun.\ Math.\ Phys.\  {\bf 179}, 529 (1996).  
  
\bibitem{Rempel}
S.~Rempel and T.~Schmitt,
``Pseudo-Differential Operators and the Index Theorem on Supermanifolds,''
Math.\ Nachr.\ {\bf 111}, 153--175  (1983).  

\bibitem{Seiberg}
  N.~Seiberg,
  ``Naturalness versus supersymmetric nonrenormalization theorems,''
  Phys.\ Lett.\ B {\bf 318}, 469 (1993)
  [hep-ph/9309335].

\bibitem{Shubin}
M.~A.~Shubin,
``Pseudodifferential Operators and Spectral Theory,''
Springer Series in Soviet Mathematics, Springer-Verlag, Berlin (2013).

\bibitem{Treves}
J.-F. Treves,
``Introduction to Pseudodifferential and Fourier Integral Operators: Pseudodifferential Operators'', 
University Series in Mathematics, Springer-Verlag, Berlin (1980).

\end{thebibliography}
\end{document}